\newcommand{\bfm}[1]{\ensuremath{\mathbf{#1}}}
\def\bd{\bfm d}
   \def\bM{\bfm M}
   \def\bP{\bfm P}  
\def\br{\bfm r}     \def\RR{\mathbb{R}}
\def\bv{\bfm v}     
\def\bw{\bfm w}   \def\bW{\bfm W}  
\def\bx{\bfm x}   \def\bX{\bfm X}  
\def\by{\bfm y}     
\def\bz{\bfm z}     
\def\calA{{\cal  A}}
\def\calD{{\cal  D}}
\def\calI{{\cal  I}}
\def\bzero{\bfm 0}
\newcommand{\cmark}{\ding{52}}%
\newcommand{\xmark}{\ding{55}}%
\newcommand{\bfsym}[1]{\ensuremath{\boldsymbol{#1}}}
\def\bdelta{\bfsym {\delta}}           \def\bDelta {\bfsym {\Delta}}
\def\bmu{\bfsym {\mu}}
\def\bsigma{\bfsym \sigma}             \def\bSigma{\bfsym \Sigma}
\def\E{\mathrm{E}}
\def\var{\text{Var}}
\def\corr{\text{corr}}
\def\prob{\mathrm{Pr}}
\renewcommand{\hat}{\widehat}
\newcolumntype{L}[1]{>{\raggedright\let\newline\\\arraybackslash\hspace{0pt}}m{#1}}
\newcolumntype{C}[1]{>{\centering\let\newline\\\arraybackslash\hspace{0pt}}m{#1}}
\newcolumntype{R}[1]{>{\raggedleft\let\newline\\\arraybackslash\hspace{0pt}}m{#1}}
\newcites{New}{References}
\titleformat{\section}{\normalfont\Large\bfseries}{\thesection}{0.5em}{}
\titlespacing*{\section} {0pt}{5pt}{3pt}
\titlespacing*{\subsection} {0pt}{5pt}{2pt}
\numberwithin{equation}{section}
\theoremstyle{plain}
\newtheorem{theorem}{Theorem}
\newtheorem{lemma}{Lemma}
\newtheorem{remark}{Remark}
\begin{document}
\title{Multiple-Splitting Projection Test for High-Dimensional Mean Vectors \vspace{-0.6ex}}
\author{
Wanjun Liu$^1$, Xiufan Yu$^2$ and Runze Li$^3$ \\ $^1$LinkedIn,  $^2$University of Notre Dame, and $^3$Pennsylvania State University
}

\date{
April 17, 2022
}
\maketitle{}

\pagestyle{plain}

\vspace{-2ex}
\begin{abstract}
We propose a multiple-splitting projection test (MPT) for one-sample mean vectors in high-dimensional settings. The idea of projection test is to project high-dimensional samples to a 1-dimensional space using an optimal projection direction such that traditional tests can be carried out with projected samples. However, estimation of the optimal projection direction has not been systematically studied in the literature. In this work, we bridge the gap by proposing a consistent estimation via regularized quadratic optimization. To retain type I error rate, we adopt a data-splitting strategy when constructing test statistics. To mitigate the power loss due to data-splitting, we further propose a test via multiple splits to enhance the testing power. We show that the $p$-values resulted from multiple splits are exchangeable.  Unlike existing methods which tend to conservatively combine dependent $p$-values, we develop an exact level $\alpha$ test that explicitly utilizes the exchangeability structure to achieve better power. Numerical studies show that the proposed test well retains the type I error rate and is more powerful than state-of-the-art tests.
\end{abstract}

\noindent {\textbf{Key Words}:} Exchangeable $p$-values; High-dimensional mean tests; Multiple data-splitting; Optimal projection direction; Regularized quadratic optimization.

\newpage

\section{Introduction} \label{sec:intro}
Hypothesis testing on mean vectors is a fundamental problem in statistical inference theory and attracts considerable interest in numerous scientific applications. For example, neuroscientists make inferences on the average signals of fMRI data to monitor brain activities and diagnose abnormal tissues \citep{ginestet2017hypothesis}. Geneticists analyze gene expression levels to understand the mechanism of how genes are related to diseases \citep{wang2015high}.
In these applications, the data dimension $p$ is typically comparable with or much larger than sample size $n$, making traditional tests ineffective or practically infeasible.
In this work, we study the problem of testing whether a population mean $\bmu$ equals to some known vector $\bmu_0$ under high-dimensional regime where $p > n$. Without loss of generality, we set $\bmu_0 = \bzero$ throughout the paper. To formally formulate the problem, let $\bX = (\bx_1, \dots, \bx_n)^\top$ be a random sample from a $p$-dimensional population $\bx$ with mean $\bmu$ and covariance $\bSigma$. Of interest is to test
\begin{equation}
    H_0: \bmu = \bfm 0 \quad \text{versus} \quad H_1: \bmu \neq \bfm 0.
\label{eqn:test_mu0}
\end{equation}
The Hotelling's $T^2$ test has been well studied when $p<n$ and $p$ is fixed. As $p$ exceeds $n$, the sample covariance matrix becomes singular and hence $T^2$ is not well-defined. Even in the case $p<n$, the testing power of $T^2$ is largely defective if $p/n\rightarrow c\in(0,1)$ \citep{bai1996effect}.

Three types of tests have been developed in efforts to handle the high-dimensional challenge. The first type is quadratic-form test, which replaces the singular sample covariance matrix with an invertible matrix (e.g., identity matrix) \citep{bai1996effect,chen2011regularized,chen2010two}. These tests tend to neglect the dependence among covariates and may suffer from low power when covariates are strongly correlated. The second type is known as extreme-type test, which utilizes the extreme value of a sequence of marginal test statistics, see \cite{cai2014two,zhong2013tests}. Such extreme-type statistics typically converge to some extreme value distribution and are generally disadvantaged by slow convergence, making it hard to control the type I error when $n$ is small. The third type is projection test \citep{lopes2011more,li2015projection,liu2020projection,runze2021linear}, which maps the high-dimensional sample to a low-dimensional space, and subsequently applies traditional methods (e.g., Hotelling's $T^2$) to the projected sample. Intuitively, the projection procedure seeks to transform the data in such a way that the dimension is reduced, while the statistical distance between $H_0$ and $H_1$ is mostly preserved through the transformed distributions.

Recently, \cite{li2015projection} proved that the optimal choice of projection direction is $\bSigma^{-1}\bmu$. To facilitate a data-driven decision regarding the projection direction, \cite{li2015projection} also proposed a projection test based on a data-splitting procedure, i.e., half of the sample is employed to estimate the optimal projection direction, while the other half is used to perform the test.
However, there are two main drawbacks with this data-splitting projection test. First, a ridge-type estimator is used to estimate the projection direction. Their power analysis relies on the assumption that the ridge-type estimator is consistent, which is no longer true in high-dimensional settings.
Secondly, the single data-splitting procedure is often criticized as only half of the sample is used to perform the test, which inevitably results in power loss.
These two drawbacks actually reveal two existing unsolved issues with the projection test based on a data-splitting procedure:
\smallskip

1. How to estimate the optimal projection direction with statistical guarantee?

2. How to mitigate the power loss caused by the data-splitting procedure?
\smallskip

In this paper, we propose a multiple-splitting projection test for high-dimensional mean vectors. Our proposed test addresses the aforementioned issues in the following two ways: (1) the optimal projection is estimated via a regularized quadratic optimization such that a consistent estimator is obtained; and (2) a multiple data-splitting procedure is proposed to improve the testing power. The main contributions can be summarized in three folds.


First, we propose a consistent estimation of the optimal projection direction via nonconvex regularized quadratic programming. Non-asymptotic error bounds are established, which hold for \emph{all stationary} points with high probability. In other words, we do not need to solve the global solution to the nonconvex optimization problem as any stationary point has desirable statistical guarantee.

Second, we prove that $p$-values constructed from a multiple data-splitting procedure are \emph{exchangeable}.
Furthermore, we generalize the exchangeability of $p$-values proposition to a more general permutation framework. As an extension, the methodology proposed in this work can be further applied to many statistical inference problems.

Third, an \emph{exact} level $\alpha$ test is proposed to combine multiple $p$-values which explicitly utilizes the exchangeability of these $p$-values. Such exchangeability is often neglected in traditional combination approaches. By doing so, our test is more powerful than the single-splitting test as well as existing combination approaches.
To the best of our knowledge, this is the first work that exploits the exchangeability of $p$-values and utilizes such exchangeability in developing high-dimensional hypothesis testing.

The rest of this paper is organized as follows. 
In Section \ref{sec:estimation}, we introduce a new estimation of the optimal projection direction via regularized quadratic programming. In Section \ref{sec:mpt}, we investigate the dependency structure of $p$-values resulted from a multiple-splitting procedure and propose an exact level $\alpha$ multiple-splitting projection test. In Section \ref{sec:numerical}, we conduct numerical studies to compare the proposed MPT with existing tests as well as other $p$-value combination methods. We conclude this paper with discussion on potential applications of this multiple-splitting framework to other statistical inference problems in Section \ref{sec:discuss}.

\section{Estimation of Optimal Projection Direction}\label{sec:estimation}
In this section, we introduce a consistent estimation of the optimal projection direction for projection tests. Section \ref{subsec:background} provides a brief introduction to projection tests. Section \ref{subsec:estimation} presents the estimator as a  stationary point of a regularized quadratic optimization problem and establishes its non-asymptotic error bounds.

\subsection{Background on Projection Tests}
\label{subsec:background}
The idea of projection test is to project the high-dimensional vector $\bx \in \RR^p$ onto a space of low dimension such that traditional tests can be applied. Let $\bP$ be a $p\times q$ full column-rank projection matrix (or vector if $q=1$) with $q<n$ and define $\by_i = \bP^\top\bx_i \in \RR^q, i=1,\dots,n$. Under $H_0$, $\E(\by_i)=\bfm 0$ and Hotelling's $T^2$ test can be applied to the $q$-dimensional projected sample $\by_i$'s,
\begin{equation*}
    T^2_{\bP} = n \widebar\bx^\top\bP(\bP^\top\widehat\bSigma\bP)^{-1}\bP^\top\widebar\bx,
\end{equation*}
where $\widebar\bx$ and $\widehat\bSigma$ are the sample mean and sample covariance matrix. Under $H_0$, $T^2_\bP$ converges to $\chi^2_q$ distribution as $n\rightarrow \infty$.

The projection test pivots the attention to the question on how to effectively construct the projection matrix $\bP$. Various approaches have been developed with respect to different choices of $\bP$. A data-dependent method was proposed in \cite{lauter1996exact} by setting $\bP=\bd$, where $\bd$ is a $p\times 1$ vector depending on data only through $\bX ^\top\bX$. 
\cite{lopes2011more} proposed a random projection test in which the entries in $\bP$ are randomly drawn from standard normal distribution.
More recently, \cite{li2015projection} proved that under normality assumption, the optimal choice $q$ is 1 and the optimal projection direction is of the form $\bP = \bSigma^{-1}\bmu$ in the sense that the power of $T^2_{\bP}$ is maximized. For non-Gaussian samples, the direction $\bP = \bSigma^{-1}\bmu$ is still asymptotically optimal as long as the sample mean of projected sample is asymptotically normal.

The estimation of the optimal projection direction has not been systematically studied yet, leaving a gap between theory and practice for projection tests. In what follows, we propose a new estimating procedure such that a consistent estimator is obtained.

\textbf{Notations:} Before proceeding, we first set up some notations. For a vector $\bv = (v_j)_{j=1}^p\in \RR^p$, let $\|\bv\|_k$ be its $\ell_k$ norm, $k=1,2$. Its $\ell_0$ norm $\|\bv\|_0$ is the number of nonzero entries in $\bv$ and $\ell_\infty$ norm is $\|\bv\|_\infty = \max |v_j|$. For a matrix $\bM = (m_{ij})_{i,j=1}^p \in \RR^{p\times p}$, its elementwise $\ell_\infty$ norm is $\|\bM\|_{\max} = \max |m_{ij}|$. For a set $\calD$, $|\calD|$ denotes its cardinality. We use $a \vee b$ to denote the larger one of $a$ and $b$.

\subsection{Estimation via Regularized Quadratic Optimization}
\label{subsec:estimation}

In this subsection, we aim to bridge the gap between theoretical analysis and practical implementation regarding the optimal projection direction. The empirical performance of a data-driven projection test relies heavily on the estimation accuracy of the optimal projection direction. However, in high-dimensional settings, there is no statistical guarantee for the ridge-type estimator introduced in \cite{li2015projection}.

We propose a new consistent estimator to improve the test performance with the assumption that $\bw^\star$ is sparse. Observing that $\bSigma^{-1}\bmu$ is the minimizer of $\frac{1}{2} \bw^\top\bSigma\bw - \bmu^\top\bw$, we propose to estimate $\bw^\star = \bSigma^{-1}\bmu$ using the following regularized quadratic optimization
\begin{equation}
\underset{\bw}{\text{minimize}} \ \frac{1}{2} \bw^\top \hat\bSigma\bw - \bar{\bx}^\top \bw + P_{\lambda}(\bw),
\label{eqn:estimation}
\end{equation}
where $P_\lambda(\bw) = \sum_{j=1}^p P_\lambda(w_j)$ is a penalty function satisfying the following conditions
\begin{enumerate}[\quad (i)]
    \item $P_\lambda(0) = 0$ and $P_\lambda(t)$ is symmetric around 0,
    \item $P_\lambda(t)$ is differentiable for $t\neq 0$ and $\lim_{t \rightarrow 0^+} P'_\lambda(t) = \lambda$,
    \item $P_\lambda(t)$ is a non-decreasing function on $t \in [0, \infty)$,
    \item $P_\lambda(t)/t$ is a non-increasing function on $t \in [0, \infty)$,
    \item There exists $\gamma>0$ such that $P_\lambda(t) + \frac{\gamma}{2}t^2$ is convex.
\end{enumerate}
Such conditions on $P_\lambda$ are mild \citep{loh2015regularized} and are satisfied by a wide variety of penalties including the Lasso \citep{tibshirani1996regression} and nonconvex regularizers such as the SCAD \citep{fan2001variable}, and the MCP \citep{zhang2010nearly}. We further assume that the sample covariance matrix $\hat\bSigma$ satisfies the following restricted strong convexity (RSC) condition,
\begin{equation}
    \bDelta^\top\hat\bSigma\bDelta \geq
    \nu \|\bDelta\|_2^2 - \tau\sqrt{\frac{\log p}{n}} \|\bDelta\|_1 \ \text{for} \ \bDelta \in \RR^p \ \text{and} \ \|\bDelta\|_1 \geq 1,
\label{eqn:rsc}
\end{equation}
where $\nu>0$ is a strictly positive constant and $\tau \geq 0$ is a non-negative constant. When $p < n$, $\hat\bSigma$ is positive definite, one can set $\tau=0$ and $\nu$ be the smallest eigenvalue of $\hat\bSigma$.
In the high-dimensional setting where $p > n$, $\hat\bSigma$ is semi-positive definite and $\bDelta^\top\hat\bSigma\bDelta \geq 0$ for all $\bDelta \in \RR^p$. Thus the RSC condition \eqref{eqn:rsc} holds trivially for $\{ \bDelta: {\|\bDelta\|_1}/{\|\bDelta\|_2^2} > c\}$ with $c = \frac{\nu}{\tau}\sqrt{\frac{n}{\log p}}$. As a result, we only require the RSC condition to hold in the set $\{ \bDelta: {\|\bDelta\|_1}/{\|\bDelta\|_2^2} \leq c, \|\bDelta\|_1\geq 1 \}$. The RSC condition \eqref{eqn:rsc} is imposed on $\hat\bSigma$ only for $\|\bDelta\|_1 \geq 1$, and it turns out the condition actually holds for all $\bDelta\in\RR^p$, see Lemma \ref{lem:full_rsc}. Such RSC-type condition is widely used in establishing the non-asymptotic error bounds in high-dimensional statistics and is satisfied with high probability under sub-Gaussianity assumption \citep{agarwal2012fast,loh2015regularized,loh2017support}. Alternatively, the RSC-type condition can also be replaced by a similar condition known as restricted eigenvalue (RE) condition \citep{bickel2009simultaneous,van2009conditions}.

It is quite challenging to obtain the global solution to the optimization problem  \eqref{eqn:estimation} if a nonconvex penalty $P_\lambda$ is used. Instead of searching for the global solution, we establish the non-asymptotic error bounds for any stationary point $\hat\bw$ that satisfies the following first-order condition,
\begin{equation}
     \hat\bSigma\hat\bw - \bar\bx + \nabla P_\lambda(\hat\bw) = \bfm 0,
\label{con:first_order}
\end{equation}
where $\nabla P_\lambda$ denotes the sub-gradient of $P_\lambda$. The condition (\ref{con:first_order}) is a necessary condition for $\hat\bw$ to achieve a local minimum. Therefore the set of $\hat\bw$ satisfying (\ref{con:first_order}) includes all local minimizers as well as the global one.

Lots of efficient algorithms have been developed to attain stationary points even when the objective function is nonconvex. These algorithms include local linear approximation \citep{fan2014strong,wang2013calibrating,zou2008one}, composite gradient descent method \citep{loh2015regularized,nesterov2013gradient}, and proximal-gradient method \citep{wang2014optimal}.
In practice, we may choose the tuning parameter $\lambda$ in the penalty function by cross-validation or the high-dimensional BIC criterion proposed in \cite{wang2013calibrating}.
We impose the following conditions,
\begin{description}
  \item[(C1)] $\bx_1, \dots, \bx_n$ are identically and independently distributed sub-Gaussian vectors.
  \item[(C2)] The sample covariance matrix $\hat{\bSigma}$ satisfies the RSC condition in (\ref{eqn:rsc}) with $3 \gamma \leq 4 \nu$.
  \item[(C3)] There exists constant $C_1>0$ such that $\|\bw^\star\|_1 \leq C_1$.
\end{description}

\begin{remark}
Condition (C3) is posited to ensure a good estimation of $\bw^\star$. By the definition of $\bw^\star$, $\hat\bSigma\bw^\star$ should be somewhat close to $\bmu$. Note that $\| \hat\bSigma\bw^\star  - \bmu\|_\infty = \| \hat\bSigma\bw^\star  - \bSigma\bw^\star\|_\infty \leq \|\hat\bSigma - \bSigma\|_{\max} \cdot \|\bw^\star\|_1$. A diverging $\|\bw^\star\|_1$ would amplify the estimation error of $\hat\bSigma$.
\end{remark}

\noindent The following theorem establishes the $\ell_1$ and $\ell_2$ error bounds for all stationary points $\hat\bw$ under the alternative hypothesis.
\begin{theorem}
Suppose conditions (C1)-(C3) hold. Let $\hat{\bw}$ be any stationary point of the problem (\ref{eqn:estimation}) with $\lambda = C\sqrt{\log p/n}$ for some large constant $C$. Then under $H_1$ (i.e., $\bw^\star \neq \bfm 0$), with probability at least $1-cp^{-1}$ for some absolute constant $c$, we have
\begin{equation*}
\| \hat{\bw}  - {\bw}^\star\|_1 = O\left(s\sqrt{\frac{\log p}{n}}\right) \quad and \quad \|\hat{\bw}  - {\bw}^\star\|_2 = O\left(\sqrt{\frac{s\log p}{n}}\right),
\end{equation*}
where $s = \|\bw^\star\|_0$ is the number of nonzero entries in $\bw^{\star}$.
\label{thm:error_bound}
\end{theorem}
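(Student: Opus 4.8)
The plan is to follow the framework for regularized $M$-estimators with (possibly nonconvex) amenable penalties, treating $\mathcal{L}_n(\bw) = \frac{1}{2}\bw^\top\hat\bSigma\bw - \bar\bx^\top\bw$ as the loss so that $\nabla\mathcal{L}_n(\bw) = \hat\bSigma\bw - \bar\bx$ and the stationarity condition \eqref{con:first_order} reads $\nabla\mathcal{L}_n(\hat\bw) + \nabla P_\lambda(\hat\bw) = \bfm 0$. First I would establish the deviation bound $\|\nabla\mathcal{L}_n(\bw^\star)\|_\infty = O(\sqrt{\log p/n})$ on a high-probability event, which is exactly what forces the stated choice $\lambda = C\sqrt{\log p/n}$. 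Since $\bw^\star = \bSigma^{-1}\bmu$ gives $\bSigma\bw^\star = \bmu$, I would decompose $\hat\bSigma\bw^\star - \bar\bx = (\hat\bSigma - \bSigma)\bw^\star + (\bmu - \bar\bx)$ and bound the two pieces separately: H\"older gives $\|(\hat\bSigma-\bSigma)\bw^\star\|_\infty \le \|\hat\bSigma-\bSigma\|_{\max}\|\bw^\star\|_1$, which is $O(\sqrt{\log p/n})$ by (C3) and the entrywise sub-Gaussian concentration of $\hat\bSigma$ under (C1), while $\|\bmu - \bar\bx\|_\infty = O(\sqrt{\log p/n})$ follows from a standard maximal inequality for $p$ sub-Gaussian averages. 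A union bound over the $O(p^2)$ entries yields the event of probability at least $1 - cp^{-1}$ on which everything below holds.

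Next I would set $\hat\bDelta = \hat\bw - \bw^\star$ and combine stationarity with restricted strong convexity. Taking the inner product of \eqref{con:first_order} with $\hat\bDelta$ and using that $\mathcal{L}_n$ is quadratic gives $\hat\bDelta^\top\hat\bSigma\hat\bDelta = \langle \nabla\mathcal{L}_n(\hat\bw) - \nabla\mathcal{L}_n(\bw^\star),\, \hat\bDelta\rangle$, whose left-hand side is lower-bounded by the RSC inequality \eqref{eqn:rsc} (extended to all $\bDelta$ via Lemma \ref{lem:full_rsc}). On the other side I would control the two linear contributions: $\langle\nabla\mathcal{L}_n(\bw^\star), \hat\bDelta\rangle$ by H\"older and the deviation bound, giving $\le \tfrac{\lambda}{2}\|\hat\bDelta\|_1$, and the penalty inner product using the structural properties (i)--(v). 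This is where the penalty analysis enters: subadditivity of $P_\lambda$ on $[0,\infty)$ and the bound $P_\lambda(t)\le\lambda|t|$ (consequences of (i)--(iv)) let me relate the penalty term to $\lambda\|\hat\bDelta_S\|_1$ and $\lambda\|\hat\bDelta_{S^c}\|_1$ over the support $S = \text{supp}(\bw^\star)$ with $|S| = s$, while the $\gamma$-convexity in (v) contributes a term of order $\gamma\|\hat\bDelta\|_2^2$ that must be absorbed against the RSC curvature. Pushing this through yields a cone-type containment $\|\hat\bDelta_{S^c}\|_1 \le 3\|\hat\bDelta_S\|_1$, hence $\|\hat\bDelta\|_1 \le 4\|\hat\bDelta_S\|_1 \le 4\sqrt{s}\,\|\hat\bDelta\|_2$.

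Finally I would close the bounds. Substituting the cone relation $\|\hat\bDelta\|_1 \le 4\sqrt{s}\,\|\hat\bDelta\|_2$ into the RSC lower bound converts both the residual $\tau\sqrt{\log p/n}\,\|\hat\bDelta\|_1$ term from \eqref{eqn:rsc} and the H\"older term $\tfrac{\lambda}{2}\|\hat\bDelta\|_1$ into quantities of order $\sqrt{s\log p/n}\,\|\hat\bDelta\|_2$, while the penalty's concavity leaves a term of order $\gamma\|\hat\bDelta\|_2^2$ that must be dominated by the curvature $\nu\|\hat\bDelta\|_2^2$. Condition (C2), $3\gamma \le 4\nu$, is precisely what keeps the effective curvature coefficient strictly positive after this term is folded in; dividing through by $\|\hat\bDelta\|_2$ then gives $\|\hat\bDelta\|_2 = O(\sqrt{s\log p/n})$, and feeding this back through the cone relation gives $\|\hat\bDelta\|_1 = O(s\sqrt{\log p/n})$, as claimed.

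The step I expect to be the main obstacle is the penalty bookkeeping in the middle stage. Because $P_\lambda$ may be nonconvex, the overall objective in \eqref{eqn:estimation} need not be convex, so the argument cannot invoke optimality of a global minimizer and must run entirely through the first-order condition \eqref{con:first_order}; this is exactly what makes the conclusion hold \emph{simultaneously for every} stationary point rather than only the global one. Extracting the cone condition in this setting requires combining the monotonicity of $\nabla P_\lambda(\cdot) + \gamma(\cdot)$ (from (v)) with the subadditive structure of $P_\lambda$ across $S$ and $S^c$, and aligning all the constants so that (C2) delivers a strictly positive --- not merely nonnegative --- effective curvature. That constant-chasing is the delicate part of the calculation.
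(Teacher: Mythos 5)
Your proposal follows, in architecture, exactly the paper's proof (which is itself the Loh--Wainwright argument): the deviation bound $\|\hat\bSigma\bw^\star - \bar\bx\|_\infty = O(\sqrt{\log p/n})$ from the decomposition $(\hat\bSigma - \bSigma)\bw^\star + (\bmu - \bar\bx)$ together with (C1), (C3) and a union bound; the basic inequality from pairing the stationarity condition \eqref{con:first_order} with the RSC condition \eqref{eqn:rsc} (extended to all of $\RR^p$ via Lemma \ref{lem:full_rsc}); the $\gamma$-convexity of $P_\lambda(\cdot) + \frac{\gamma}{2}\|\cdot\|_2^2$ to control the subgradient term; and a cone condition plus Cauchy--Schwarz to close the $\ell_2$ and then the $\ell_1$ bound. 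Up to the cone step, this matches the paper line by line.

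The cone step is where you genuinely deviate, and there your constants do not align with (C2). You extract the cone over the true support $S = \mathrm{supp}(\bw^\star)$ via subadditivity and $P_\lambda(t) \le \lambda|t|$. Executed concretely, this forces you to lower-bound $P_\lambda(\hat\bDelta_{S^c})$ by $\lambda\|\hat\bDelta_{S^c}\|_1 - \frac{\gamma}{2}\|\hat\bDelta_{S^c}\|_2^2$ (Lemma \ref{lem:penalty}(b); for MCP the slack $\frac{\gamma}{2}t^2$ is exact on $|t|\le\lambda/\gamma$, so it cannot be sharpened), \emph{on top of} the $\frac{\gamma}{2}\|\hat\bDelta\|_2^2$ you already paid when converting $-\langle \nabla P_\lambda(\hat\bw), \hat\bDelta\rangle$ by $\gamma$-convexity. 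The resulting inequality is $(\nu - \gamma)\|\hat\bDelta\|_2^2 \le \frac{3\lambda}{2}\|\hat\bDelta_S\|_1 - \frac{\lambda}{2}\|\hat\bDelta_{S^c}\|_1$, so both the cone containment and the final rate require $\gamma < \nu$ --- strictly stronger than (C2)'s $3\gamma \le 4\nu$. In the regime $\nu \le \gamma \le \frac{4}{3}\nu$ your effective curvature is nonpositive and the argument stalls, so your claim that (C2) is ``precisely'' what makes the curvature positive does not hold for this bookkeeping. The paper avoids the double $\gamma$-payment by never converting the penalty back to $\ell_1$ norms on $S^c$: it keeps everything in penalty form, reaching $(2\nu - \frac{3}{2}\gamma)\|\hat\bDelta\|_2^2 \le 3P_\lambda(\bw^\star) - P_\lambda(\hat\bw)$ (total curvature cost $\frac{3}{4}\gamma$: $\frac{\gamma}{2}$ from convexity plus $\frac{\gamma}{4}$ from a single use of Lemma \ref{lem:penalty}(b)), and then invokes Lemma \ref{lem:penalty}(c) --- stated for the index set $\calI$ of the $s$ \emph{largest entries of} $\hat\bDelta$, not the support of $\bw^\star$ --- which converts the nonnegative quantity $3P_\lambda(\bw^\star) - P_\lambda(\hat\bw)$ directly into $\lambda(3\|\hat\bDelta_\calI\|_1 - \|\hat\bDelta_{\calI^c}\|_1)$ at no further $\gamma$-cost. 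Substituting that lemma for your support-based bookkeeping repairs the gap; the rest of your argument then goes through verbatim and recovers the theorem under (C2) as stated.
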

\begin{remark}
Though inspired by \cite{loh2015regularized}, we would like to clarify the difference between Theorem \ref{thm:error_bound} and the results in \cite{loh2015regularized}. The optimization problem in \cite{loh2015regularized} requires an additional constraint $\|\bw\|_1 \leq R$ for some tuning parameter $R$ to ensure $\|\hat\bw\|_1$ is bounded by $R$. $R$ needs to be chosen carefully such that $\bw^\star$ is feasible and both the penalty parameter $\lambda$ and sample size $n$ also depend on $R$. However, how to choose $R$ is not clear in practice. In our work, we modify the RSC condition by substituting $\|\bDelta\|^2_2$ for $\|\bDelta\|_2$ in the RSC condition (\ref{eqn:rsc}) so that the constraint $\|\hat\bw\|_1 \leq R$ is no longer needed.
\end{remark}

Note that the error bounds in Theorem~\ref{thm:error_bound} hold for all stationary points. In other words, any local solution is guaranteed with desirable statistical accuracy and a global one is unnecessary if it is too challenging to achieve.
Theorem \ref{thm:error_bound} implies $\hat\bw$ is a consistent estimator under $H_1$ if $\bw^\star$ is sparse (more precisely, $\sqrt{s \log p/n}\rightarrow 0$). Such consistency under $H_0$ is not guaranteed as there is no signal in the true parameter $\bw^\star = \bfm 0$. Fortunately, with the data-splitting technique, we will see in Section \ref{sec:mpt} that the size of the proposed projection test is always controlled regardless of the consistency of the estimator $\hat\bw$.

\section{Data-Splitting Based Projection Test}
\label{sec:mpt}
In this section, we present full methodological details of our proposed multiple-splitting projection test (MPT) together with its theoretical properties. After carefully studying the dependency of $p$-values resulted from a multiple-splitting procedure, we introduce a new combination framework that makes use of the dependency structure. Section \ref{subsec:SPT} demonstrates a single-splitting projection test using the estimator introduced in the previous section. Section \ref{subsec:exchangebility} studies the exchangeability of $p$-values. Section \ref{subsec:traditionalCombination} provides a brief overview of traditional approaches for combining multiple $p$-values. Section \ref{subsec:MPT} formally presents our combination framework and the proposed MPT.

\subsection{Single-Splitting Projection Test (SPT)}
\label{subsec:SPT}

Data-splitting technique has a long history in statistical applications
and remains attractive in modern statistics \citep{wasserman2009high,barber2019knockoff}.
We begin with one single data-splitting. Let $\calD = \{\bx_1, \dots, \bx_n\}$ denote the set of full sample and we partition the full sample into two disjoint sets $\calD_1 = \{\bx_1, \dots, \bx_{n_1}\}$ and $\calD_2 = \{\bx_{n_1+1}, \dots, \bx_n\}$ with $|\calD_1|=n_1$ and $|\calD_2|=n_2=n-n_1$. The idea is to use $\calD_1$ to estimate the optimal projection direction while use $\calD_2$ to conduct the test with projected sample.
To be more specific, we estimate the optimal projection direction $\bw^\star$ using a stationary point $\widehat\bw$ of the following regularized quadratic optimization problem
\begin{equation}
  \underset{\bw}{\text{minimize}} \ \frac{1}{2} \bw^\top \hat\bSigma_1\bw - \bar{\bx}_1^\top \bw + P_{\lambda}(\bw),
\label{eqn:estimation_D1}
\end{equation}
where $\bar{\bx}_1$ and $\hat\bSigma_1$ are sample mean and sample covariance matrix computed from $\calD_1$.
Then we project the observations in $\calD_2$ to a 1-dimensional space as follows: $y_i = \hat\bw^\top \bx_{i}, i=n_1+1,\dots, n$. The one-sample $t$-test is readily applied to the projected sample and the resulting test statistic is
\begin{equation}
    T_{\hat\bw}= \sqrt{n_2}\widebar{y}/s_y,
\end{equation}
where $\widebar y$ and $s_y^2$ are the sample mean and sample variance of $\{ y_{n_1+1}, \dots, y_n \}$. Due to the data-splitting, the estimator $\hat\bw$ is independent of $\calD_2$. As a result, the test $T_{\widehat\bw}$ is an exact one-sample $t$-test if $\bx_i$'s are normally distributed, and the $p$-value of test $T_{\hat\bw}$ is given by
\begin{equation}
    p(T_{\hat\bw}) = 2\left( 1-G_{n_2-1}(|T_{\widehat\bw}|) \right),
\label{eqn:p_value}
\end{equation}
where $G_{n_2-1}$ is the cdf of $t_{n_2-1}$ distribution. Without normality assumption, $T_{\widehat\bw}$ has an asymptotic standard normal distribution, and the $p$-value is given by $p(T_{\hat\bw}) = 2\left( 1-\Phi(|T_{\widehat\bw}|) \right). $
We reject the null hypothesis at significance level $\alpha$ whenever $p(T_{\hat\bw}) < \alpha$.

We refer to $T_{\widehat\bw}$ as the single-splitting projection test (SPT). Ideally, one would like to use full sample to estimate $\bw^\star$ and use full sample to perform the test, which makes the limiting distribution challenging to derive since the projection $\hat\bw$ and sample are dependent.
Thanks to the data-splitting procedure, an exact $t$-test can be achieved as $\hat\bw$ is independent of $\calD_2$. Furthermore, we would like to point out that the size of the SPT is well controlled regardless of how $\hat\bw$ is estimated, but a consistent $\hat\bw$ ensures high power under the alternative. The following theorem demonstrates the asymptotic power of the SPT.
\begin{theorem} \label{thm:SPT_power}
Suppose that the conditions in Theorem \ref{thm:error_bound} hold. Further assume \\ $(1 \vee \|\bmu\|_\infty)s\sqrt{\log p/n} \rightarrow 0$ and $n_2/n \rightarrow \kappa \in (0,1)$, where $n_2$ is the sample size of $\calD_2$. Let $\zeta = \bmu^\top\bSigma^{-1}\bmu$ and $z_{\alpha/2}$ be the upper $\alpha/2$ quantile of $N(0,1)$, then the asymptotic power of the proposed SPT at a given significance level $\alpha$ is
\begin{equation*}
   \beta(T_{\widehat\bw}) = \Phi(-z_{\alpha/2} + \sqrt{n\kappa\zeta}).
\end{equation*}
\end{theorem}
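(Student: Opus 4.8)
The plan is to argue conditionally on the first subsample $\calD_1$. Conditioning on $\calD_1$ fixes the estimated direction $\hat\bw$ and, by the data-splitting construction, makes it independent of $\calD_2$. Given $\calD_1$, the projected scalars $y_i=\hat\bw^\top\bx_i$ for $i=n_1+1,\dots,n$ are i.i.d.\ with conditional mean $\hat\bw^\top\bmu$ and conditional variance $\hat\bw^\top\bSigma\hat\bw$, so the entire high-dimensional problem collapses to the limiting behaviour of a one-sample $t$-statistic formed from these scalars. The only quantities carrying information about the projection are the two scalars $\hat\bw^\top\bmu$ and $\hat\bw^\top\bSigma\hat\bw$.

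First I would center the statistic and split off the signal,
\begin{equation*}
T_{\hat\bw}=\frac{\sqrt{n_2}\,(\bar y-\hat\bw^\top\bmu)}{s_y}+\frac{\sqrt{n_2}\,\hat\bw^\top\bmu}{s_y}.
\end{equation*}
Conditionally on $\calD_1$, a Lindeberg central limit theorem (whose Lindeberg condition is supplied by the sub-Gaussianity in (C1), uniformly over the triangular array indexed by $n$) shows that $\sqrt{n_2}(\bar y-\hat\bw^\top\bmu)/\sqrt{\hat\bw^\top\bSigma\hat\bw}\Rightarrow N(0,1)$, while $s_y^2/(\hat\bw^\top\bSigma\hat\bw)\to1$ in probability. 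By Slutsky's theorem, $T_{\hat\bw}$ then behaves like $N(\theta_n,1)$ with noncentrality
\begin{equation*}
\theta_n=\sqrt{n_2}\,\frac{\hat\bw^\top\bmu}{\sqrt{\hat\bw^\top\bSigma\hat\bw}}.
\end{equation*}

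The crux is to identify the limit of $\theta_n$. The key algebraic observation is that the truth $\bw^\star=\bSigma^{-1}\bmu$ satisfies $(\bw^\star)^\top\bmu=(\bw^\star)^\top\bSigma\bw^\star=\zeta$, so that plugging $\bw^\star$ into the ratio gives exactly $(\bw^\star)^\top\bmu/\sqrt{(\bw^\star)^\top\bSigma\bw^\star}=\sqrt{\zeta}$. It therefore remains to control the two perturbations $\hat\bw^\top\bmu-\zeta=(\hat\bw-\bw^\star)^\top\bmu$ and $\hat\bw^\top\bSigma\hat\bw-\zeta=2\bmu^\top(\hat\bw-\bw^\star)+(\hat\bw-\bw^\star)^\top\bSigma(\hat\bw-\bw^\star)$, where I used $(\bw^\star)^\top\bSigma=\bmu^\top$. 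Bounding these by Hölder's inequality, $|(\hat\bw-\bw^\star)^\top\bmu|\le\|\bmu\|_\infty\|\hat\bw-\bw^\star\|_1$ and $(\hat\bw-\bw^\star)^\top\bSigma(\hat\bw-\bw^\star)\le\lambda_{\max}(\bSigma)\|\hat\bw-\bw^\star\|_2^2$, and then inserting the $\ell_1$ and $\ell_2$ rates from Theorem \ref{thm:error_bound}, both perturbations are of order $(1\vee\|\bmu\|_\infty)s\sqrt{\log p/n}$ on the high-probability event of that theorem. A first-order expansion of the ratio around $(\zeta,\zeta)$ then gives $\hat\bw^\top\bmu/\sqrt{\hat\bw^\top\bSigma\hat\bw}=\sqrt{\zeta}\,(1+o(1))$, and with $n_2/n\to\kappa$ this is what produces $\theta_n\simeq\sqrt{n\kappa\zeta}$.

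I expect the main obstacle to lie exactly in this last step, because the estimation error must be shown negligible not in absolute terms but relative to the $\sqrt{n_2}$-scaled signal $\sqrt{n\kappa\zeta}$; this is precisely the role of the extra factor $(1\vee\|\bmu\|_\infty)$ and of the assumption $(1\vee\|\bmu\|_\infty)s\sqrt{\log p/n}\to0$, which calibrates the perturbations so that the induced change in the noncentrality does not distort the limiting power. A secondary technical point is that the convergences above are conditional on $\hat\bw$, so the conditional statements must be promoted to unconditional ones by integrating over the event where Theorem \ref{thm:error_bound} holds, whose complement has probability $O(p^{-1})\to0$. Finally I would assemble: rejection occurs when $|T_{\hat\bw}|$ exceeds the two-sided level-$\alpha$ critical value, which tends to $z_{\alpha/2}$ since $t_{n_2-1}\Rightarrow N(0,1)$, so that $\beta(T_{\hat\bw})=\Phi(-z_{\alpha/2}+\theta_n)+\Phi(-z_{\alpha/2}-\theta_n)+o(1)$; as $\theta_n\to+\infty$ the left-tail term vanishes, yielding $\beta(T_{\hat\bw})=\Phi(-z_{\alpha/2}+\sqrt{n\kappa\zeta})$.
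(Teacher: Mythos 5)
Your proposal follows essentially the same route as the paper's proof: condition on $\calD_1$ so that $\hat\bw$ is fixed and independent of $\calD_2$, apply a conditional CLT to the projected observations, reduce everything to the noncentrality parameter $\sqrt{n_2}\,\hat\bw^\top\bmu/\sqrt{\hat\bw^\top\bSigma\hat\bw}$, and show it converges to $\sqrt{n\kappa\zeta}$ by combining the error bounds of Theorem \ref{thm:error_bound} with the assumption $(1\vee\|\bmu\|_\infty)s\sqrt{\log p/n}\to 0$; your cross-term bound $|\bmu^\top(\hat\bw-\bw^\star)|\le\|\bmu\|_\infty\|\hat\bw-\bw^\star\|_1$ is exactly the paper's.

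The one step where you deviate is the quadratic remainder in the variance: you bound $(\hat\bw-\bw^\star)^\top\bSigma(\hat\bw-\bw^\star)\le\lambda_{\max}(\bSigma)\|\hat\bw-\bw^\star\|_2^2$, which implicitly requires the operator norm of $\bSigma$ to be bounded. That is not supplied by conditions (C1)--(C3) as the paper actually uses them: the paper's Lemma \ref{lem:sam_cov_rate} exploits only coordinatewise sub-Gaussianity, which controls $\|\bSigma\|_{\max}$ but not $\lambda_{\max}(\bSigma)$, and in the paper's own compound-symmetry simulation design one has $\lambda_{\max}(\bSigma)\asymp p$, so your bound $O\bigl(\lambda_{\max}(\bSigma)\,s\log p/n\bigr)$ would not vanish there. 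The paper avoids this entirely by staying in $\ell_1$/max-norm geometry: it bounds $|\hat\bw^\top\bSigma\hat\bw-\bw^{\star\top}\bSigma\bw^\star|$ by terms of the form $\|\bSigma\|_{\max}\|\hat\bw\|_1\|\hat\bw-\bw^\star\|_1$ and $\|\bSigma\|_{\max}\|\bw^\star\|_1\|\hat\bw-\bw^\star\|_1$, using (C3) to keep $\|\bw^\star\|_1$, hence $\|\hat\bw\|_1$, bounded. The repair to your argument is one line: replace the $\lambda_{\max}$ bound by $\|\bSigma\|_{\max}\|\hat\bw-\bw^\star\|_1^2=O(s^2\log p/n)\to 0$. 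With that substitution your proof matches the paper's; your additional care about the $t_{n_2-1}$ critical value converging to $z_{\alpha/2}$, the vanishing left-tail term, and the promotion of conditional limits to unconditional ones on the high-probability event of Theorem \ref{thm:error_bound} are points the paper treats only implicitly.
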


\begin{remark}
The term $\zeta$ can be interpreted as the signal strength of alternative hypothesis. As long as $n \bmu^\top\bSigma^{-1}\bmu \rightarrow \infty$, the proposed SPT has asymptotic power approaching 1. Under local alternative $\bmu = \bdelta/\sqrt{n}$ for some fixed $\bdelta \neq \bfm 0$, the asymptotic power of the SPT is $\Phi(-z_{\alpha/2} + \sqrt{\kappa\bdelta^\top\bSigma^{-1}\bdelta})$.
To achieve high power empirically, we adopt the same strategy as in  \cite{li2015projection} and recommend to take $n_2 = \lfloor \kappa n \rfloor$ with $\kappa \in [0.4,0.6]$.
\end{remark}

\subsection{Exchangeability of $p$-values}
\label{subsec:exchangebility}

To compensate the power loss due to the  data-splitting procedure, we consider a multiple-splitting procedure (formally presented in Section \ref{subsec:MPT}), which repeats the data-splitting multiple times and aggregates all the information in $p$-values to make inferences on $H_0$. More specifically, we consider $m$ times of data-splitting for some fixed integer $m$. Let $\pi_k$, $k=1, \dots, m$, be a random permutation of $\{1, \dots, n\}$. Accordingly, let $\calD^{\pi_k} = \{\bx_{\pi_k(1)}, \dots, \bx_{\pi_k(n)}\}$ denote the permutated sample. For each $k = 1, \dots, m$, we apply the SPT to $\calD^{\pi_k}$ and obtain the $p$-value $p_k$ according to \eqref{eqn:p_value}.
Before proceeding with the combination of $p$-values, we first investigate the dependence structure of these $p$-values. The following theorem establishes the exchangeability among the $p$-values.
\begin{theorem} \label{thm:exchangeable}
The $p$-values $(p_1, p_2, \dots, p_m)$ resulted from the multiple-splitting procedure are \\ exchangeable, i.e., $(p_1, \dots, p_m) \overset{d}{=} (p_{\pi(1)}, \dots, p_{\pi(m)})$ for any $\pi$, a permutation of $\{1,\dots,m\}$.
\end{theorem}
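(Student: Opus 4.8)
The plan is to isolate the two independent sources of randomness — the common sample $\calD$ and the collection of splitting permutations $\pi_1,\dots,\pi_m$ — and then to inherit exchangeability of the $p$-values from the (trivial) exchangeability of the permutations. The first step is to write each $p$-value as the image of a single \emph{fixed} deterministic map. The entire SPT pipeline applied to $\calD^{\pi_k}$ — estimate a stationary point $\hat\bw$ from the first $n_1$ permuted observations, project the remaining $n_2$ observations, form $T_{\hat\bw}$, and read off the $p$-value via \eqref{eqn:p_value} — is one and the same measurable operation applied to differently permuted copies of the same data. Hence I would fix a measurable $g:(\RR^p)^n\to[0,1]$ so that $p_k = g(\bx_{\pi_k(1)},\dots,\bx_{\pi_k(n)}) = g(\calD^{\pi_k})$ for every $k$, the crucial point being that $g$ does not depend on $k$.

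Second, I would record the structural assumptions that drive the argument: the permutations $\pi_1,\dots,\pi_m$ are i.i.d.\ uniform on the symmetric group $S_n$ and are drawn independently of the sample $\calD$. Being i.i.d., the permutations are themselves exchangeable, and since they are independent of $\calD$, permuting the permutation-components while holding $\calD$ fixed leaves the joint law unchanged: for any $\sigma\in S_m$,
\[
(\pi_1,\dots,\pi_m,\calD)\ \overset{d}{=}\ (\pi_{\sigma(1)},\dots,\pi_{\sigma(m)},\calD).
\]
Third, I would push this identity through the vector-valued map $\Psi(\tau_1,\dots,\tau_m,\bz)=\bigl(g(\bz^{\tau_1}),\dots,g(\bz^{\tau_m})\bigr)$, where $\bz^{\tau}$ denotes reindexing the $n$ blocks of $\bz$ by $\tau$. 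Since $(p_1,\dots,p_m)=\Psi(\pi_1,\dots,\pi_m,\calD)$ and $(p_{\sigma(1)},\dots,p_{\sigma(m)})=\Psi(\pi_{\sigma(1)},\dots,\pi_{\sigma(m)},\calD)$, applying the measurable map $\Psi$ to both sides of the distributional identity above yields $(p_1,\dots,p_m)\overset{d}{=}(p_{\sigma(1)},\dots,p_{\sigma(m)})$, which is exactly exchangeability.

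The step I expect to carry the real conceptual weight is the first one, namely the clean separation of randomness. The $p$-values are genuinely dependent, because the single sample $\calD$ is reused across all $m$ splits, so it is tempting to fear that this shared data destroys symmetry; the point to make carefully is that the permutation $\sigma$ acts only on the labels $k$ — equivalently, only on the exchangeable components $\pi_k$ — while $\calD$ enters every coordinate of $\Psi$ through the identical function $g$ and is left untouched by $\sigma$. Once this is phrased correctly, exchangeability of the $p$-values reduces to exchangeability of an i.i.d.\ family, and no property of the data beyond independence from the permutations is needed; in particular the i.i.d.\ assumption on the $\bx_i$ and the consistency of $\hat\bw$ play no role here. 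A minor bookkeeping point is to ensure $g$ is truly deterministic: if the algorithm producing the stationary point $\hat\bw$ relies on auxiliary randomness, that randomness should be taken i.i.d.\ across the $m$ splits and independent of $(\calD,\pi_1,\dots,\pi_m)$, which is absorbed into the i.i.d.\ family driving $\Psi$ without altering the argument.
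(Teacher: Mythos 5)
Your proof is correct and takes essentially the same route as the paper's: both arguments rest on writing $p_k = g(\calD^{\pi_k})$ for a single fixed measurable map $g$ and exploiting that the splitting permutations are i.i.d.\ and independent of the common sample $\calD$. The only difference is presentational --- the paper conditions on $\calD$, observes that the $p_k$ are then i.i.d., and cites the fact that a mixture of i.i.d.\ sequences is exchangeable, whereas you reach the same conclusion by permuting the $\pi_k$'s in the joint law and pushing forward through the fixed map $\Psi$, which is simply a self-contained unwinding of that mixture step.
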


The exchangeability of $p$-values holds no matter it is under $H_0$ or $H_1$.
We would like to point out that such exchangeability structure of $p$-values holds for a general permutation framework. In many statistical problems, with the technique of data-splitting, the first half sample $\calD_1^{\pi_k}$ can be used to learn the underlying model (e.g., parameter estimation, variable selection). We denote the acquired knowledge by $f(\calD_1^{\pi_k})$. Then together with the second half sample $\calD_2^{\pi_k}$, a $p$-value (or some other statistic) can be derived for some specific inference problem $p_k = g(f(\calD_1^{\pi_k}), \calD_2^{\pi_k} )$. With fixed mappings $f$ and $g$, it can be shown that the $p_k$'s are also exchangeable. For instance, let us consider the high-dimensional linear regression problem and of interest is to test whether some coefficient, say $\beta_j$, is 0 or not. With data-splitting, one can use $\calD_1^{\pi_k}$ to select a set of important covariates $\hat\calA$ such that $|\hat\calA|<n-1$. Then we can fit ordinary least squares with the  covariate set $\hat\calA \cup \{j\}$ and obtain the $p$-value $p_k$ of a test regarding whether $\beta_j=0$. Since the $p$-values are exchangeable, the MPT introduced in Section \ref{subsec:MPT} is readily to combine the $p$-values.

In fact, such exchangeability holds even without data-splitting. The key of exchangeability lies in that conditioning on the dataset $\calD$, its $m$ permutations $\calD^{\pi_1}, \dots, \calD^{\pi_m} $ are independent from each other. The following theorem generalizes the results in Theorem \ref{thm:exchangeable}.

\begin{theorem}\label{thm:general_exchangeable}
Let $\calD=\{\bx_1,\dots,\bx_n\}$ be a random sample and $\pi_1, \cdots, \pi_m$ be $m$ permutations of $\{1, \cdots, n\}$. $\calD^{\pi_1}, \dots, \calD^{\pi_m} $ denote the $m$ permutated samples of $\calD$. Let $g$ be a mapping from $\calD^{\pi_k}$ to a statistic: $T_k = g(\calD^{\pi_k})$, then $T_1, \dots, T_m$ are exchangeable.
\end{theorem}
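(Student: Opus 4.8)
The plan is to prove exchangeability by conditioning on the data $\calD$ and exploiting the fact that, once $\calD$ is held fixed, each statistic $T_k$ depends only on its own permutation $\pi_k$. Throughout I would treat $\pi_1, \dots, \pi_m$ as i.i.d.\ random permutations drawn independently of $\calD$, exactly as in Theorem~\ref{thm:exchangeable}; the randomness of the permutations is what makes the statement true, since for a fixed collection of permutations the shared entries of $\calD^{\pi_j}$ and $\calD^{\pi_k}$ would in general couple $T_j$ and $T_k$ asymmetrically and destroy exchangeability.

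First I would fix a realization of $\calD$ and regard $T_k = g(\calD^{\pi_k})$ as a single measurable map applied to $\pi_k$ alone, namely $T_k = h_\calD(\pi_k)$ with $h_\calD(\pi) = g(\bx_{\pi(1)}, \dots, \bx_{\pi(n)})$. Because $(\pi_1, \dots, \pi_m)$ is independent of $\calD$, its conditional law given $\calD$ is just its unconditional (product) law, so the permutations remain i.i.d.\ after conditioning on $\calD$. Applying the common map $h_\calD$ coordinatewise then shows that $T_1, \dots, T_m$ are conditionally i.i.d.\ given $\calD$.

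Next I would upgrade ``conditionally i.i.d.'' to ``conditionally exchangeable'' and then integrate out $\calD$. Writing $q_\calD(\cdot) = \prob(T_1 \in \cdot \mid \calD)$ for the common conditional law, for any permutation $\sigma$ of $\{1,\dots,m\}$ and measurable sets $A_1,\dots,A_m$ the conditional product structure gives
\[
\prob\bigl(T_1\in A_1, \dots, T_m\in A_m \mid \calD\bigr) = \prod_{k=1}^m q_\calD(A_k) = \prob\bigl(T_{\sigma(1)}\in A_1, \dots, T_{\sigma(m)}\in A_m \mid \calD\bigr),
\]
where the right-hand equality is merely a reindexing of the product. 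Taking expectations over $\calD$ via the tower property removes the conditioning and yields $(T_1, \dots, T_m) \overset{d}{=} (T_{\sigma(1)}, \dots, T_{\sigma(m)})$, which is the asserted exchangeability.

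The one delicate point, which I would argue explicitly rather than wave through, is the conditional-independence step: I must justify that conditioning on $\calD$ does not entangle the permutations, which is precisely where independence of $(\pi_1,\dots,\pi_m)$ from $\calD$ enters. Once that is established everything else is bookkeeping. Finally I would remark that Theorem~\ref{thm:exchangeable} is the special case in which $g$ factors through the data-splitting maps, $g(\calD^{\pi_k}) = g_0\bigl(f(\calD_1^{\pi_k}), \calD_2^{\pi_k}\bigr)$, so it requires no separate proof.
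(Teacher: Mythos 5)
Your proof is correct and takes essentially the same approach as the paper's: both condition on $\calD$, observe that because $\pi_1,\dots,\pi_m$ are i.i.d.\ random permutations independent of $\calD$ the statistics $T_k = g(\calD^{\pi_k})$ are conditionally i.i.d.\ given $\calD$, and then deduce unconditional exchangeability. The only difference is that the paper cites the de Finetti-type fact that a mixture of i.i.d.\ sequences is exchangeable, whereas you verify that step by hand via the product-form conditional law and the tower property; your explicit remark that the randomness of the permutations and their independence from $\calD$ is what drives the result (it is implicit in the paper's setup) is a clarification, not a deviation.
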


\subsection{Traditional Combination of $p$-values}
\label{subsec:traditionalCombination}
One popular strategy to enhance testing power is via the combination of $p$-values \citep{romano2019multiple,yu2019innovated,yu2020fisher,yu2022power}. In fact, 
combining multiple $p$-values from a set of hypothesis tests has been widely used in statistical literature. Let $p_1, \dots, p_m$ denote $m$ valid $p$-values. That is, under $H_0$,
\begin{equation*}
    \prob(p_k \leq u) = u,\ 0 < u < 1\quad  \text{for } k = 1,\dots, m.
\end{equation*}
A natural question is how we can decide whether $H_0$ should be rejected based on the $m$ $p$-values such that the type I error rate is still retained.

Classical approaches require independence assumption among $p$-values. Examples include the Fisher's method, 
the Pearson's method, 
the Stouffer's method, 
the Tippett's method, 
and many others.
In the meantime, significant efforts were made to combine dependent $p$-values.
\cite{ruschendorf1982random} proved that twice the average $p$-values remains a valid $p$-value and proposed an average-based combination test, which rejects $H_0$ at level $\alpha$ if the average of ${p_1,\dots, p_m}$ is less than or equal to $\alpha/2$.
\cite{romano2019multiple} introduced a quantile-based combination test. A special case is we reject $H_0$ at level $\alpha$ if the median of $p$-values is less than or equal to $\alpha/2$.
Under $H_0$, we know $Z_k = \Phi^{-1}(p_k) \sim N(0,1)$, where $\Phi(\cdot)$ is the cdf of $N(0,1)$. Assuming $(Z_1, \dots, Z_m)^\top$ follows a multivariate normal distribution, \cite{romano2019multiple} proposed a Z-average test based on the sample mean of $Z_k$'s, that rejects $H_0$ if $| \sum_{k=1}^m Z_k| \geq m z_{\alpha/2}$, where $z_{\alpha/2}$ is the upper $\alpha/2$-quantile of $N(0,1)$.
More recently, \cite{liu2020cauchy} introduced a new combination test based on the Cauchy transformation which is insensitive to dependencies among $p$-values. The test rejects $H_0$ at level $\alpha$ if $ \sum_{k=1}^m \tan\{(0.5-p_i)\pi\} \geq mc_{\alpha}$, where $c_{\alpha}$ is the upper $\alpha$-quantile of standard Cauchy distribution.


In general, these methods tend to be over-conservative in order to
control Type I  error rate without taking advantage of certain
dependence structure (e.g., exchangeability). This can be regarded
as a trade-off between potential size inflation and possible power
loss. The traditional combination methods generally ignore the
dependency structure, therefore tend to make unnecessarily large
comprise in testing power in order to retain a correct size for a
less favorable scenario. In Section 4.1, we use
simulation studies to compare the size and power of the proposed
MPT with those of traditional combination methods. On the one
hand, the numerical studies show that the size of MPT is slightly
below the level $\alpha=0.05$ while the size of traditional
combination methods are almost $0$. On the other hand, the power
of MPT is much higher than those traditional combination methods.
In summary, compared to the traditional ones, the proposed MPT is
less conservative in terms of testing size, and exhibits much
higher testing power.


\subsection{Multiple-Splitting Projection Test (MPT)}
\label{subsec:MPT}
Based on the $m$ exchangeable $p$-values $\{p_1, \dots, p_m\}$, the question is how we can make a decision on whether $H_0$ should be rejected or not. To improve upon the traditional methods, we propose a new framework to combine $p$-values obtained from multiple splits. The proposed framework takes advantage of the exchangeability structure among those $p$-values, as a result, achieving higher testing power than existing commonly used combination approaches.

\begin{figure}[!htb]
\centering
    \includegraphics[width=0.95\textwidth]{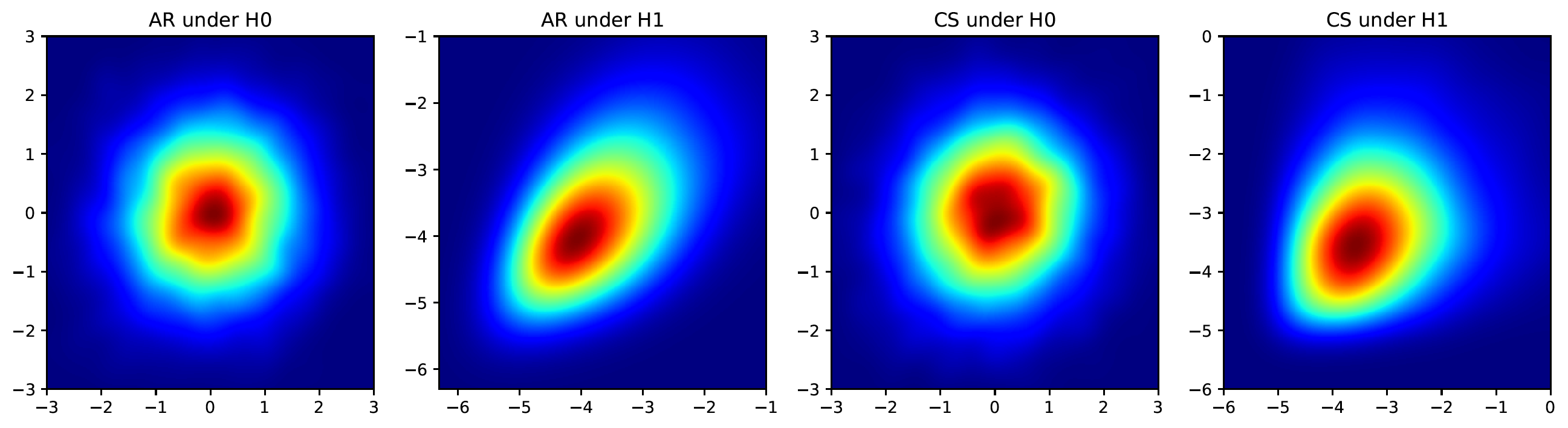}

\caption{Density plots of $(Z_1, Z_2)$ under $H_0$ and $H_1$ with autocorrelation (AR) covariance structure and compound symmetry (CS) covariance structure when $n=40$ and $p=1000$.}

\label{fig:density_4}
\end{figure}

Let $Z_k = \Phi^{-1}(p_k)$, $k=1,\dots, m$. Under $H_0$, $Z_1, \dots, Z_m$ are exchangeable standard normal random variables, with correlation $\corr(Z_i, Z_j) = \rho \geq 0$ for any pair $(i,j), i \neq j$ due to exchangeability.
Figure \ref{fig:density_4} depicts the density of $(Z_1, Z_2)$ with $m=2$ under different covariance structures (see Section \ref{sec:numerical} for detailed descriptions of simulation settings). It shows that $(Z_1, Z_2)$ are clearly exchangeable (symmetric). Under $H_0$, $(Z_1, Z_2)$ are approximately normally distributed centering at $(0,0)$. Under $H_1$, the joint distribution of $(Z_1, Z_2)$ is not normal and its center is far away from $(0,0)$.

Let $\widebar{Z}$ be the sample mean, then we have $\E(\bar Z)= 0$ and $\var(\bar Z) = (1 + (m-1)\rho)/m$ under $H_0$. If $(Z_1, \dots, Z_m)$ are jointly normally distributed, then the standardized statistic of $\widebar{Z}$, $M_{\rho} = {\bar Z}/{\sqrt{(1 + (m-1)\rho)/m}} \sim N(0,1).$
In general, by the central limit theorem for exchangeable random variables (e.g., see \cite{klass1987central}), we know
\begin{equation}\label{eqn:mrho}
    M_{\rho} = {\bar Z}/{\sqrt{(1 + (m-1)\rho)/m}} \overset{d}{\rightarrow} N(0,1).
\end{equation}
The correlation $\rho$ is typically unknown and needs to be estimated from the sample. Two approaches to estimate $\rho$ will be provided later in this subsection. Let $\hat \rho$ denote an estimator of $\rho$. We first present the asymptotic distribution of $M_{\widehat\rho}$ under $H_0$.
\begin{theorem}
\label{thm:Mhatrho}
Let $\hat \rho$ be a consistent estimator of $\rho > 0$.
Under $H_0$, we have as $m \rightarrow \infty$,
\begin{equation} \label{eq: Mhatrho}
    M_{\hat\rho} = {\bar Z}/{\sqrt{(1 + (m-1)\hat\rho)/m}} \overset{d}{\rightarrow} N(0,1).
\end{equation}

\end{theorem}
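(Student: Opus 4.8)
The plan is to reduce the claim to the exchangeable central limit theorem already recorded in \eqref{eqn:mrho}, treating the substitution of $\hat\rho$ for $\rho$ as a multiplicative perturbation that is asymptotically negligible, and then invoking Slutsky's theorem. First I would factor the statistic as
\[
M_{\hat\rho} = M_\rho \cdot R_m, \qquad R_m := \sqrt{\frac{1 + (m-1)\rho}{1 + (m-1)\hat\rho}},
\]
where $M_\rho$ uses the true correlation. Since \eqref{eqn:mrho} already gives $M_\rho \overset{d}{\rightarrow} N(0,1)$, the theorem follows once I establish $R_m \overset{p}{\rightarrow} 1$. (I would first note that, because $\hat\rho \overset{p}{\rightarrow} \rho > 0$, the quantity $1 + (m-1)\hat\rho$ is positive with probability tending to one, so $R_m$ is well defined on an event of probability approaching $1$.)

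The core step is the analysis of $R_m^2$. Writing $\hat\rho = \rho + \epsilon_m$ with $\epsilon_m \overset{p}{\rightarrow} 0$ by consistency, I would rearrange
\[
R_m^2 = \frac{1 + (m-1)\rho}{1 + (m-1)\rho + (m-1)\epsilon_m} = \left( 1 + a_m \epsilon_m \right)^{-1}, \qquad a_m := \frac{m-1}{1 + (m-1)\rho}.
\]
The key observation is that the \emph{deterministic} factor $a_m$ is bounded by $1/\rho$ and converges to $1/\rho$ as $m \to \infty$ (this is precisely where the hypothesis $\rho > 0$ enters). Hence $|a_m \epsilon_m| \leq \rho^{-1}|\epsilon_m| \overset{p}{\rightarrow} 0$, so $R_m^2 \overset{p}{\rightarrow} 1$, and the continuous mapping theorem gives $R_m \overset{p}{\rightarrow} 1$. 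Combining $M_\rho \overset{d}{\rightarrow} N(0,1)$ with $R_m \overset{p}{\rightarrow} 1$, Slutsky's theorem yields $M_{\hat\rho} = M_\rho R_m \overset{d}{\rightarrow} N(0,1)$, which is \eqref{eq: Mhatrho}.

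The main obstacle, and the reason the result is not an immediate corollary of \eqref{eqn:mrho}, is that $\hat\rho$ enters the normalizing constant multiplied by the diverging factor $m-1$, so one cannot simply substitute the limit $\hat\rho \to \rho$ into $(m-1)\hat\rho$; a naive continuous-mapping argument fails. The resolution is that the same diverging factor appears in the numerator of $R_m^2$, so the ratio stabilizes and only the \emph{normalized} error $a_m\epsilon_m$ survives. Controlling this term is exactly what forces the assumption $\rho > 0$: when $\rho = 0$ the factor becomes $a_m = m-1 \to \infty$, and mere consistency of $\hat\rho$ would no longer guarantee $a_m \epsilon_m \overset{p}{\rightarrow} 0$. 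I would also emphasize that the argument uses nothing about $\hat\rho$ beyond $\hat\rho \overset{p}{\rightarrow} \rho$, so any consistent estimator (including the two estimators introduced later in this subsection) may be plugged in without altering the conclusion.
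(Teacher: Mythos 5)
Your proof is correct and follows essentially the same route as the paper's: invoke the exchangeable central limit theorem \eqref{eqn:mrho} for $M_\rho$, show the ratio of the two normalizers tends to one in probability, and conclude by Slutsky's theorem. Your explicit analysis of $R_m$ via the bound $a_m \leq 1/\rho$ actually supplies the detail the paper leaves implicit when it simply asserts that the ratio converges to one, and it correctly identifies why the hypothesis $\rho > 0$ is indispensable.
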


\begin{remark}
\label{rmk:slow_converge}
When $\rho=0$, the $p$-value combination is reduced to the independent case. Hence  $M_{\rho}$ converges to the standard normal distribution at rate $\sqrt{m}$ 
as $m \rightarrow \infty$. However, when $\rho \neq 0$, $M_{\widehat{\rho}}$ converges at a slower rate $m$ as the variance of $\bar Z$ is not asymptotically degenerate. Hypothesis testing based on such a slow convergence rate is more likely to fail in controlling the type I error in finite-sample performance.
\end{remark}

Remark \ref{rmk:slow_converge} suggests that the asymptotic distribution in \eqref{eq: Mhatrho} does not serve as a good cornerstone to test $H_0$.
The slow convergence rate and potential size inflation become major concerns for practitioners. In practice, one may conduct a large number of splits which bring in extra computational burden.
This motivates us to seek an exact level $\alpha$ test to ensure the finite-sample performance. Let $c({\rho}, m, \alpha/2)$ be the upper $\alpha/2$ quantile of the distribution of $M_{\hat\rho}$ and we reject $H_0$ if $|M_{\widehat{\rho}}| > c({\rho}, m, \alpha/2)$. Given $\hat\rho$, the exact distribution of $M_{\widehat{\rho}}$ depends on $\rho$ and is very difficult to derive, so is $c({\rho}, m, \alpha/2)$. Instead, we use the critical value $c(m, \alpha)$ that is chosen against the least favorable $\rho$ such that type I error is controlled regardless $\rho$, i.e., $c(m, \alpha/2) = \sup_{\rho \in (0,1)} c({\rho}, m, \alpha/2)$. Then we reject $H_0$ at level $\alpha$ if
\begin{equation} \label{eqn: MPT}
    |M_{\widehat{\rho}}| > c(m, \alpha/2).
\end{equation}
We refer to the test \eqref{eqn: MPT} as multiple-splitting projection test (MPT) and summarize full methodological details in Algorithm \ref{algo: MPT}. Note that the critical value $c(m, \alpha/2)$ depends on the way you estimate $\rho$. With the choice $c(m, \alpha/2)$, the MPT is still an exact level $\alpha$ test but no longer a size $\alpha$ test.

\begin{algorithm} [H]
\caption{Multiple-splitting Projection Test (MPT)}\label{algo: MPT}
\small
\begin{algorithmic}[1]
\STATE \textbf{Input:} dataset $\calD$, the number of splits $m$, $n_1$, and significance level $\alpha$
\STATE \textbf{Step 1:} randomly generate $m$ permutations of $\{1, \dots, n\}$, denoted by $\pi_k$, $k=1,\dots, m$
\STATE \textbf{Step 2:} obtain multiple $p$-values 
\FOR{ $k=1$ \TO $m$ }
\STATE (1) partition the permuted sample $\calD^{\pi_k}$ into $\calD^{\pi_k}_1$ and $\calD^{\pi_k}_2$
and obtain $\bar\bx_1^{k}$, $\widehat{\bSigma}^{k}_1$ from $\calD^{\pi_k}_1$
\STATE (2) estimate $\widehat\bw^k$ using a stationary point of $ \underset{\bw}{\text{minimize}} \ \frac{1}{2} \bw^\top \hat\bSigma^{k}_1\bw - \bar{\bx}_1^{k\top} \bw + P_{\lambda}(\bw)$
\STATE (3) project $\calD_2^{\pi_k}$ and obtain $y_i^k = \hat\bw^{k\top} \bx_{\pi_{k}(i)}, i=n_1+1,\dots, n$
\STATE (4) $T_{\hat\bw^k}= \sqrt{n_2}\widebar{y}^k/s^k_y$, where $\widebar{y}^k$ and $(s^k_y)^2$ are the sample mean and variance of $\{ y_{n_1+1}^k, \dots, y_n^k \}$
\STATE (5) compute the $p$-values by $p_k =  2\left( 1-\Phi(|T_{\widehat\bw^k}|) \right) $
\ENDFOR
\STATE \textbf{Step 3:} combine the $p$-values
\STATE \quad\ (1) compute the sample mean $\widebar{Z}$ and variance $s_{Z}^2$  of $\{Z_k = \Phi^{-1}(p_k), k=1,\dots, m\}$
\STATE \quad\ (2) compute test statistic $M_{\widehat\rho} = {\widebar Z}/{\sqrt{(1 + (m-1)\widehat\rho)/m}}$ 
\STATE \textbf{Return:} Reject $H_0$ at level $\alpha$ if $|M_{\widehat\rho}| > c(m, \alpha/2)$
\end{algorithmic}
\end{algorithm}


Two estimators of $\rho$ are introduced in \cite{follmann2012test}. Let $s_Z^2$ is the sample variance of $Z_i$'s. The first estimator is given by $\hat\rho_1 = \max(0, 1-s_Z^2)$. The second estimator, which is quantile based, is given by
$\hat\rho_{2} = \max(0, 1 - (m-1)s_Z^2/\chi^2_{m-1}(1-\beta)),$
where $\chi^2_{m-1}(1-\beta)$ is the upper $(1-\beta)$ quantile of $\chi^2_{m-1}$. An appealing approach is to choose $\beta$ as large as possible so that the test with $c(m, \alpha/2) = z_{\alpha/2}$ remains level $\alpha$ for all $\rho$. We refer to \cite{follmann2012test} for more details. \cite{follmann2012test} also provides the table of critical values $c(m, \alpha/2)$ and $\beta$ for different $m$, see Tables \ref{tab:Mrho1} and \ref{tab:Mrho2} in Appendix \ref{supp:critical_values}. We see that $c(m, \alpha/2)$ increases dramatically as $m$ increases, leading to low power when $m$ is large. Hence the quantile approach is preferred when $m$ is relatively large.


As for the choice of $m$, Figure \ref{fig:n40_multi_m} shows how the testing power
changes with the number of splits $m$ 
under settings with different correlation structure, samples sizes and dimensions.
We would like to point out that the proposed MPT is a valid level $\alpha$ test regardless the dependence structure
since the critical value is chosen such that Type I error is controlled for all dependence structure (i.e., for all $\rho$).
In other words, the proposed MPT is able to control Type I error regardless the number of splits $m$ and data
characteristics (e.g., sample size, data dimensionality). The main purpose of conducting multiple splitting
(i.e., choosing $m > 1$) is to mitigate power loss brought by single data-splitting procedure and
improve the testing power over SPT. Under alternative hypothesis, the testing power from each
split depends on the original data characteristics. Hence  theoretically the exact relationship
between $m$ and the testing power of MPT also depend on original data characteristics.
As shown in the plot, the testing power increases as $m$ increases but the improvement of
power becomes insignificant when $m$ is relatively large. 
Considering the fact that a large number of splits will increase computational cost, we recommend to set $m \in [30, 60]$ as a reasonable choice in practice considering the trade-off between testing power and computational cost.

\begin{figure}[!h]
\centering
    \includegraphics[width=0.9\textwidth]{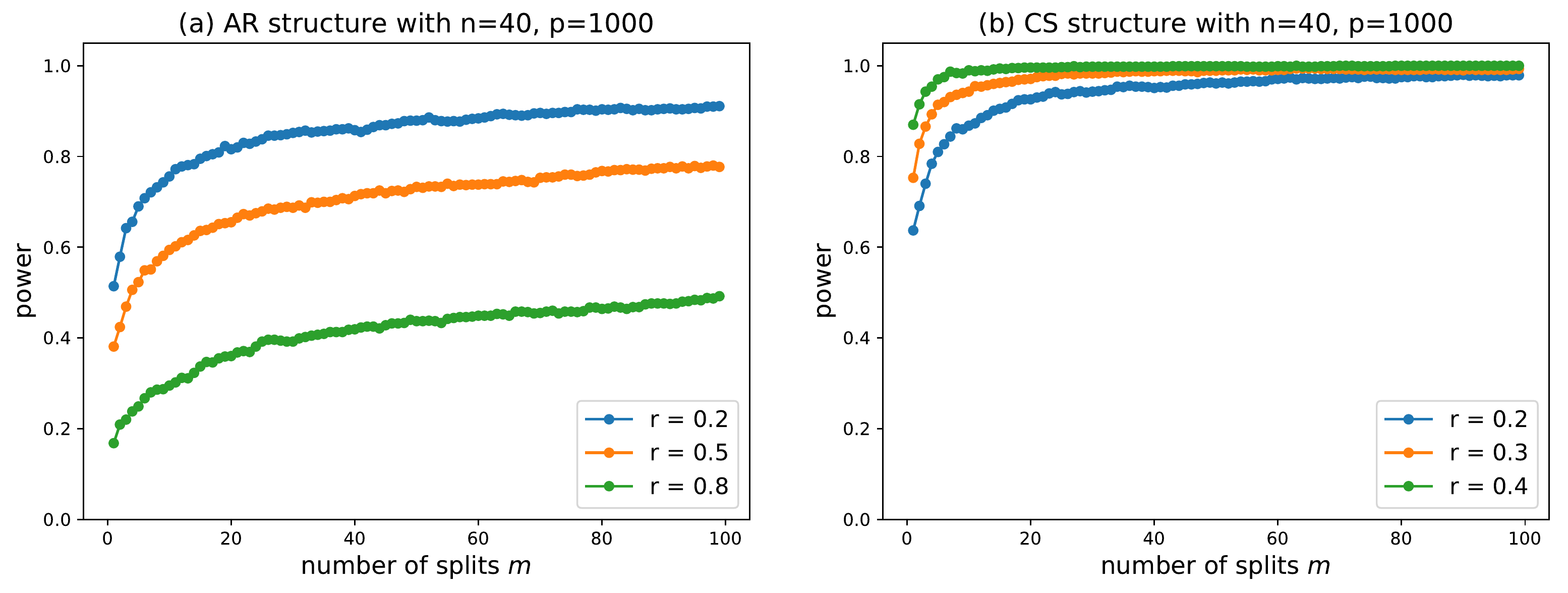}
    \includegraphics[width=0.9\textwidth]{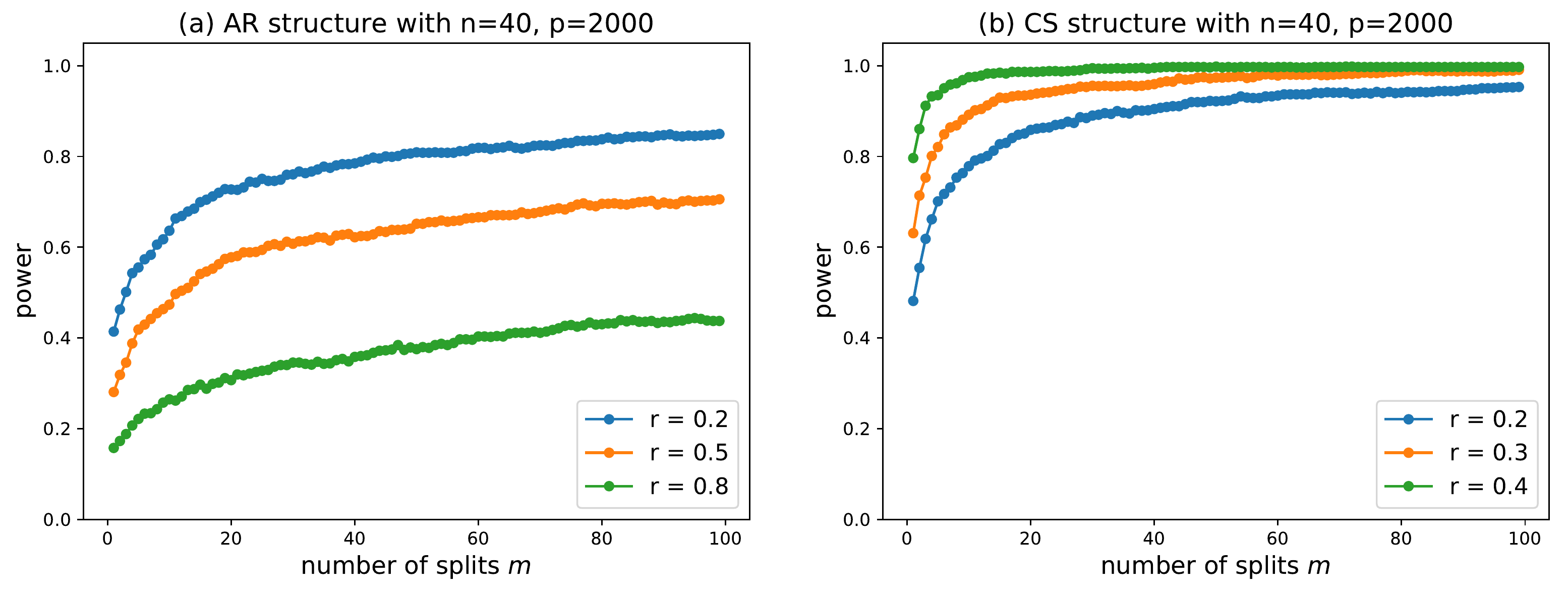}
\caption{Testing power versus number of splits $m$ for autocorrelation (AR) structure and compound symmetry (CS) structure with different choices of $n$ and $p$. Different colors correspond to different correlations $r$.}
\label{fig:n40_multi_m}
\end{figure}

\section{Numerical Studies}
\label{sec:numerical}
In this section, we conduct numerical studies to demonstrate the finite-sample performance of the proposed MPT through both Monte Carlo simulation and a real data example. 
\subsection{Monte Carlo Simulation}
\label{subsec:simulation}

We compare the proposed MPT with other state-of-the-art tests and $p$-value combination approaches.
In particular, we include the following tests in our experiments:

\begin{itemize}
    \item \textbf{Projection test}: our proposed SPT and MPT (with $m=40$), ridge projection test (Ridge) \citep{li2015projection}, and random projection test (RPT) \citep{lopes2011more}.
    \item \textbf{Combining $p$-values}: Median-based combination (Median) \citep{romano2019multiple}, average-based combination (Average)  \citep{ruschendorf1982random}, average-based combination using $\Phi^{-1}(p_k)$ (Z-average)
    \citep{romano2019multiple}, and Cauchy combination (Cauchy) \citep{liu2020cauchy}.
    \item \textbf{Quadratic-form test}: CQ test \citep{chen2010two}.
    \item \textbf{Extreme-type test}: CLX test \citep{cai2014two}. 
\end{itemize}

We generate a random sample of size $n$ from $N_p(c\bmu, \bSigma)$ with $\bmu = (\bfm 1^\top_{10}, \bfm 0^\top_{p - 10})^\top$. We set $c = 0, 0.5$ to examine the size and the power of these tests, respectively.
To examine the test robustness to non-normally distributed data, we also generate random samples from a multivariate $t_6$-distribution.
Let $\sigma_{ij}$ be the $(i,j)$ entry in $\bSigma$. For $r \in (0,1)$, we consider the following two covariance matrices: (1) compound symmetry (CS) with $\sigma_{ij} = r$ if $i \neq j$ and  $\sigma_{ij} = 1$ if $i=j$ and (2)  autocorrelation (AR) with $\sigma_{ij} = r^{|i-j|}$.
We vary $r$ from $0.1$ to $0.9$ with step size $0.1$ to examine the impact of correlation on size and power. We set sample size $n = 40, 100$ and dimension $p = 1000$.

In the above settings, the optimal projection direction $\bSigma^{-1}\bmu$ is sparse or approximately sparse.
When $\bSigma$ has the compound symmetry structure, $\bSigma^{-1}$ is an approximately sparse matrix in the sense that the off-diagonal entries are of order $p^{-1}$ and dominated by its diagonal entries. Then optimal projection direction $\bSigma^{-1}\bmu$ is also approximately sparse since the first $10$ entries dominate the rest entries. When $\bSigma$ has the autocorrelation structure, $\bSigma^{-1}$ is a 3-sparse matrix, meaning that at most three entries in each row or column are nonzero, and the resulting optimal projection direction $\bSigma^{-1}\bmu$ is sparse as well.
We set $\kappa=0.5$ when implementing the SPT and the MPT, i.e., half of the sample is used to estimate the projection direction and the other half is used to perform the test. The quantile approach $\hat\rho_2$ is used to estimate pairwise correlation among $Z_k$'s.
We set the type I error rate $\alpha=0.05$. All simulation results are based on 10,000 independent replications. 

\begin{figure}[!htb]
\centering
    \includegraphics[width=0.95\textwidth]{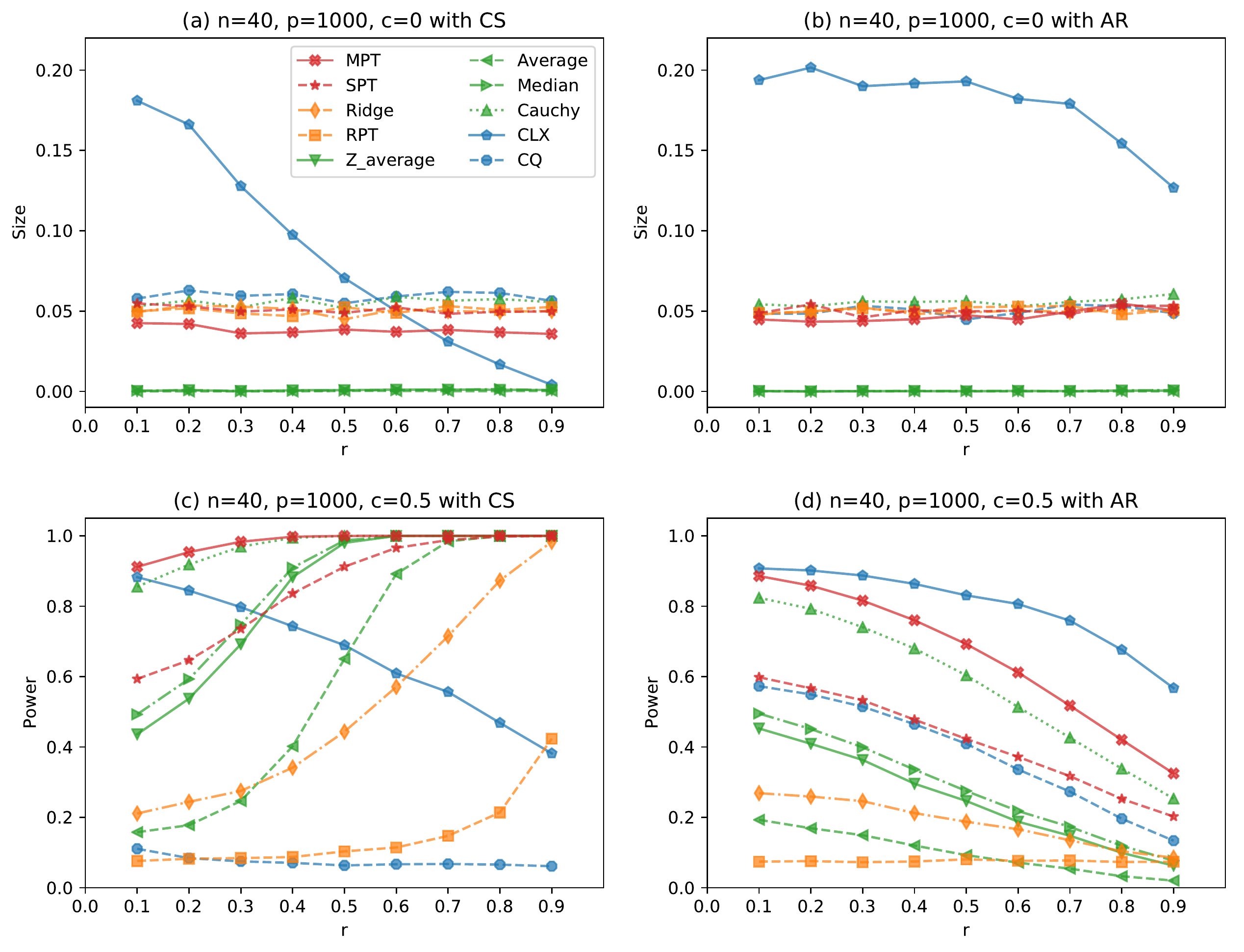}
\caption{Size and power of different tests for normally distributed samples with $n=40$. Panels (a) and (b) show size ($c=0$) under the null hypothesis for the CS and AR structure, respectively. Panels (c) and (d) show power ($c=0.5$) under the alternative hypothesis for the CS and AR structure, respectively.}\label{fig:n40_normal}
\end{figure}

Figure \ref{fig:n40_normal} reports the size and power of different tests for normally distributed samples with $n=40$. 
In terms of size, the proposed MPT successfully controls the type I error rate below $\alpha$. It is slightly conservative since the critical value is chosen against the worst $\rho$. 
The size of Cauchy test and CQ test are slightly inflated. The Median test, Average test and Z-average test are too conservative and their size are very close to $0$.
The CLX test completely fails to control the type I error rate 
due to the slow convergence rate of the limiting distribution.
As for power analysis, under the CS structure, the MPT outperforms all other tests. 
Cauchy test is slightly less powerful than the MPT but more powerful than other conservative combination approaches. 
The power of CLX test and CQ test decreases as the correlation increases since both tests ignore the dependence among variables.
In addition, CLX test and CQ test require the largest eigenvalue of $\bSigma$ is upper bounded by some constant, which is not satisfied in the CS structure.
Under the AR structure, 
note that the CLX test cannot control the size at all under $H_0$ (can be as large as $0.20$). 
The size inflation makes the testing power artificially high and hence it is not trustworthy.
Excluding the CLX test, the proposed MPT has the best performance, followed by the Cauchy test. 
Other conservative combination tests are even less powerful than the SPT, which indicates such conservative combination methods do not necessarily improve the testing power.

\begin{figure}[!htb]
\centering
    \includegraphics[width=0.95\textwidth]{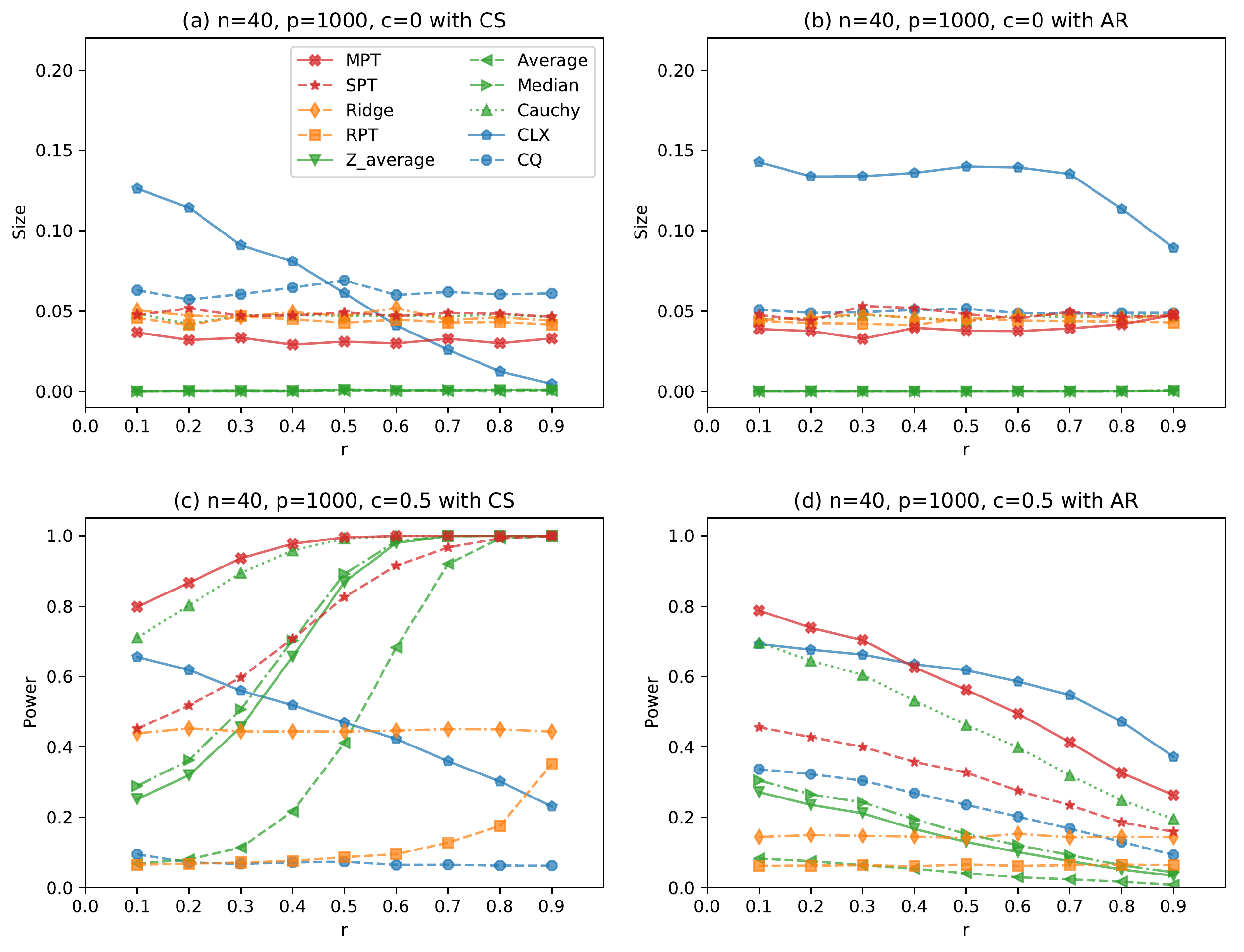}
\caption{Size and power of different tests for samples following multivariate $t_6$ distribution with $n=40$. Panels (a) and (b) show size ($c=0$) under the null hypothesis for the CS and AR structures, respectively. Panels (c) and (d) show power ($c=0.5$) under the alternative hypothesis for the CS and AR structures, respectively.}
\label{fig:n40_t6}
\end{figure}

We also examine the finite-sample performance of the proposed MPT when the normality assumption is not satisfied.
Figure \ref{fig:n40_t6} shows the size and power comparisons of different tests when $\bx_i$'s are generated from multivariate $t_6$ distribution with AR and CS covariance structure. The results show a similar pattern as those in the normal settings, which provide numerical evidences on the robustness of the MPT to non-Gaussianity. When $n=100$, the patterns of size and power are similar to that of $n=40$. Due to the limit of space, we relegate the results for $n=100$ to the appendix, see Figures \ref{fig:n100_normal} and \ref{fig:n100_t6} in Appendix \ref{supp:numeric}.

The numerical results in this subsection emphasize that the MPT greatly improves the testing power upon the SPT thanks to the multiple splits. In a brief summary, our proposed MPT successfully controls the type I error rate and achieves the highest testing power in comparison with all other state-of-the-arts level $\alpha$ tests. In addition, the studies reveal that the proposed MPT is quite robust to non-Gaussianity.

\subsection{Real Data Analysis}
\label{subsec:realdata}
We apply the proposed MPT and SPT together with other tests introduced above to a real dataset of high resolution micro-computed tomography \citep{percival2014embryonic}. This dataset contains skull bone densities of $n=29$ mice with genotype ``T0A1'' in a genetic mutation study. For each mouse, bone density is measured for 16 different areas of its skull at density levels between 130 - 249. 
In this empirical analysis, we are interested in comparing the bone density patterns of two areas in the skull, namely ``Mandible'' and ``Nasal''. We use all density levels between 130 - 249 for our analysis, and hence dimension $p = 120$.
Since the two areas come from the same mouse, we first take the difference of bone density in the two areas at the corresponding density level for each observation. Then we normalize the bone density in the sense that $\frac{1}{29}\sum_{i=1}^{29} X^2_{ij} = 1$ for all $1 \leq j \leq 120$. The null hypothesis is the density patterns of two skull areas are the same.

\begin{table}[h]
\caption{Decisions on whether null hypothesis should be rejected or not at significance level $\alpha=0.05$ based on different tests for the bone density dataset with various signal strengths. The numbers in the parentheses in the $p$-values if applicable.}
\centering
\resizebox{1.\columnwidth}{!}{
\begin{tabular}{lllllllllllllllll}
\toprule
$\delta$ &  1.0 & 0.8 & 0.6 & 0.4 & 0.3 & 0.2 & 0.18 \\
\midrule
MPT & \cmark & \cmark & \cmark & \cmark & \cmark & \cmark & \cmark\\
Cauchy & \cmark & \cmark & \cmark & \cmark & \cmark & \cmark & \cmark\\
Median & \cmark & \cmark & \cmark & \cmark & \cmark & \xmark & \xmark\\
Average & \cmark & \cmark & \cmark & \cmark & \xmark & \xmark & \xmark\\
Z-average & \cmark & \cmark & \cmark & \cmark & \cmark & \xmark & \xmark\\
\midrule

SPT & \cmark~$(10^{-10})$ & \cmark~$(10^{-9})$ & \cmark~$(10^{-7})$ & \cmark~$(10^{-6})$ & \cmark~$(10^{-4})$ & \cmark~(0.042) & \xmark~(0.246)  \\
Ridge & \cmark~$(10^{-8})$ & \cmark~$(10^{-7})$ & \cmark~$(10^{-5})$ & \cmark~(0.001) & \cmark~(0.014) & \xmark~(0.146) & \xmark~(0.387) \\
RPT & \cmark~$(10^{-9})$ & \cmark~$(10^{-8})$ & \cmark~$(10^{-6})$ & \cmark~$(10^{-4})$ & \cmark~$(0.010)$ & \xmark~(0.203) & \xmark~(0.347) \\
CQ & \cmark~(0) & \cmark~(0) & \cmark~(0) & \cmark~$(10^{-4})$ & \xmark~(0.081) & \xmark~(0.772) & \xmark~(0.945) \\
CLX & \cmark~(0) & \cmark~$(10^{-14})$ & \cmark~$(10^{-8})$ & \cmark~$(0.004)$ & \xmark~(0.189) & \xmark~(0.965) & \xmark~(0.994) \\
\bottomrule
\end{tabular}
}
\label{pvalue}
\end{table}

We apply the proposed MPT and SPT together with other tests to the bone density dataset.
The decisions as well as $p$-values if applicable (in the parentheses) are reported in the first column in Table \ref{pvalue}. All tests are able to reject the null hypothesis, implying that the bone density patterns are significantly different. To compare the power of different tests, we further conduct tests as we decrease the signal strength in the bone density difference. To be specific, let $\bar\bx$ be the sample mean and $\br_i = \bx_i - \bar\bx$ be the residual for the $i$th subject. Then a new observation $\bz_i = \delta\bar\bx + \br_i$ is constructed for the $i$th subject for some $\delta \in [0,1]$. By this construction, a smaller $\delta$ leads to a weaker signal strength and would make the test more challenging. Table \ref{pvalue} also reports the decisions and $p$-values (in the parentheses) for $\delta = 1.0, 0.8, 0.6, 0.4, 0.3, 0.2, 0.18$.
When $\delta \geq 0.4$, all the tests perform well and reject the null hypothesis at level $0.05$. When $\delta$ decreases to $0.3$, the CQ test, the CLX test and the average based combination test start to fail to reject the null hypothesis.
When $\delta = 0.2$, the proposed MPT and SPT and Cauchy combination test are able to reject the null. Further decreasing $\delta$ to $0.18$, only the MPT and Cauchy combination test can detect the density difference.
This real data example demonstrates that the proposed MPT is more powerful than existing tests and performs well even when the signal is very weak.


\section{Discussion}
\label{sec:discuss}
In this work, we study the hypothesis test for one-sample mean vectors in high dimensions. We first study the question of estimating optimal projection direction and provide statistical guarantee. Furthermore, we propose the multiple-splitting projection test, which makes use of the exchangeability of multiple $p$-values, to mitigate the power loss arising from the single data-splitting procedure.
The proposed multiple data-splitting framework can be easily extended to a two-sample problem in which the optimal projection direction is $\bSigma^{-1}(\bmu_1 - \bmu_2)$ \citep{li2015projection}. Sharing the same spirit, half of the sample can be used to estimate the projection direction and the remaining half is used to perform the two-sample $t$-test. Then resultant multiple $p$-values can be combined similarly to the MPT.
As pointed out in Theorem \ref{thm:general_exchangeable}, the exchangeability phenomenon generally holds for a permutation framework. This work can be extended to many other statistical inference problems, such as testing the coefficients for a high-dimensional regression model. We hope such insight provides new ideas for researchers from related areas.  Another interesting extension is to develop more refined combination methods which better handle the exchangeability. We leave these interesting questions as future work.

\bibliographystyle{agsm}
\bibliography{ref-mpt}

\newpage
\section*{Appendices}
\appendix
The appendices provide additional materials for the main manuscript. Appendix \ref{supp:critical_values} provides the tables of critical values for the proposed MPT.
Appendix \ref{supp:proof} presents  technical lemmas and complete proofs of theoretical results. Appendix \ref{supp:numeric} reports additional numerical results of size and power comparisons for $n=100$ to serve as a complementary to the numerical studies in Section \ref{sec:numerical}.

\section{Tables of Critical Values}
\label{supp:critical_values}

In this section, we present the tables of critical values for the proposed MPT.
\cite{follmann2012test} derives the critical values of $c(m, \alpha/2)$ and $\beta$ for tests $M_{\hat\rho_1}$ and $M_{\hat\rho_2}$ at level $\alpha=0.05$, respectively. We summarize the critical values in Tables \ref{tab:Mrho1}-\ref{tab:Mrho2}.

\begin{table}[H]
\caption{Critical value $c(m, \alpha/2)$ with respect to $m$ for the test $M_{\widehat\rho_1}$ at level $\alpha=0.05$ }
\small
\centering
\begin{tabular}{ccccccccccccccc}
\toprule
& & \multicolumn{10}{c}{number of splits $m$} \\
\cmidrule(r){3-12}
Method & Value & 2 & 3 & 4 & 5 & 10 & 20 & 40 & 100 & 1000 & 10000 \\
\midrule
$M_{\widehat\rho_1}$ & $c(m,\alpha/2)$ & 1.988 & 2.058 & 2.133 & 2.204 & 2.489 & 2.865 & 3.126 & 4.115 & 7.17 & 12.66 \\
\bottomrule
\end{tabular}
\label{tab:Mrho1}
\end{table}

\begin{table}[H]
\caption{\protect\centering The smallest value $\beta$ with respect to $m$ to control the type I error of the test $M_{\widehat\rho_2}$ \newline with $c(m, \alpha/2) = z_{\alpha/2}$ at level $\alpha=0.05$}
\small
\centering
\begin{tabular}{ccccccccccccccc}
\toprule
& & \multicolumn{10}{c}{number of splits $m$} \\
\cmidrule(r){3-12}
Method & Value & 2 & 3 & 4 & 5 & 10 & 20 & 40 & 100 & 1000 & 10000 \\
\midrule
$M_{\widehat\rho_2}$ & $\beta$ & 0.25 & 0.25 & 0.25 & 0.25 & 0.20 & 0.20 & 0.15 & 0.15 & 0.10 & 0.05 \\
\bottomrule
\end{tabular}
\label{tab:Mrho2}
\end{table}




\section{Lemmas and Proofs}
\label{supp:proof}

\subsection{Technical Lemmas}
\label{supp:lemmas}
In this subsection, we introduce a few technical lemmas to help establish theoretical results. Before proceeding, we first introduce some notations for sub-Gaussian and sub-exponential random variables. The sub-Gaussian norm of a sub-Gaussian random variable $X$ is
\begin{equation*}
   \|X\|_{\psi_2}= \sup_{p\geq 1} p^{-\frac{1}{2}}(\E|X|^p)^{1/p}.
\end{equation*}
The sub-exponential norm of a sub-exponential random variable $X$ is
\begin{equation*}
   \|X\|_{\psi_1}= \sup_{p\geq 1} p^{-1}(\E|X|^p)^{1/p}.
\end{equation*}

\begin{lemma}
    If the RSC condition (\ref{eqn:rsc}) holds, then
    $$
    \bDelta^\top\bW\bDelta \geq \nu \|\bDelta\|_2^2 - \tau\sqrt{\frac{\log p}{n}} \|\bDelta\|_1 \ \text{for all} \ \bDelta\in\RR^p.
    $$
\label{lem:full_rsc}
\end{lemma}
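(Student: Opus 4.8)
The plan is to deduce the inequality for an arbitrary $\bDelta$ from the assumed bound \eqref{eqn:rsc}, which is postulated only on the set $\{\|\bDelta\|_1 \geq 1\}$, by a homogeneity (rescaling) argument. The mechanism is that the quadratic form $\bDelta^\top\hat\bSigma\bDelta$ and the term $\|\bDelta\|_2^2$ are both positively homogeneous of degree two in $\bDelta$, whereas the penalty $\|\bDelta\|_1$ is homogeneous of degree one. Throughout I treat the matrix $\bW$ in the statement as the sample covariance $\hat\bSigma$ appearing in \eqref{eqn:rsc}; the only structural property I need is that, being a sample covariance, it is positive semidefinite, so $\bDelta^\top\hat\bSigma\bDelta \geq 0$ for every $\bDelta \in \RR^p$.

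First I would dispose of the trivial regimes. If $\|\bDelta\|_1 \geq 1$, the desired inequality is literally the hypothesis \eqref{eqn:rsc}, so there is nothing to prove. If $\bDelta = \bfm 0$, both sides vanish and the inequality holds with equality. The only case requiring work is $0 < \|\bDelta\|_1 < 1$.

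In that remaining case, set $t = \|\bDelta\|_1 \in (0,1)$ and rescale $\tilde\bDelta = \bDelta / t$, so that $\|\tilde\bDelta\|_1 = 1$ and \eqref{eqn:rsc} applies to $\tilde\bDelta$. Multiplying the resulting inequality through by $t^2$ and substituting $\tilde\bDelta^\top\hat\bSigma\tilde\bDelta = t^{-2}\bDelta^\top\hat\bSigma\bDelta$, $\|\tilde\bDelta\|_2^2 = t^{-2}\|\bDelta\|_2^2$, and $\|\tilde\bDelta\|_1 = 1$ yields
\begin{equation*}
\bDelta^\top\hat\bSigma\bDelta \;\geq\; \nu\|\bDelta\|_2^2 \;-\; t^2\,\tau\sqrt{\frac{\log p}{n}}.
\end{equation*}
Since $0 < t < 1$ we have $t^2 \leq t = \|\bDelta\|_1$, hence the subtracted term obeys $t^2\tau\sqrt{\log p/n} \leq \tau\sqrt{\log p/n}\,\|\bDelta\|_1$, and the stated bound follows in this regime as well, completing all cases.

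I do not expect a genuine obstacle here, as the whole argument is a one-line scaling computation; the single point requiring care — and the real reason the extension is valid — is the degree mismatch exploited in the last step. Because the penalty scales like $t$ while the quadratic terms scale like $t^2$, contracting $\bDelta$ below unit $\ell_1$-norm can only loosen the inequality, and the elementary estimate $t^2 \leq t$ on $(0,1)$ is precisely what converts the rescaled bound back into the target form. This is exactly the payoff of having $\|\bDelta\|_2^2$ rather than $\|\bDelta\|_2$ in \eqref{eqn:rsc}, as flagged in the remark following Theorem \ref{thm:error_bound}; with the degree-one version $\|\bDelta\|_2$ the two sides would scale identically and no such free extension would be available.
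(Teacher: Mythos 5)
Your proof is correct and follows essentially the same route as the paper's: rescale $\bDelta$ to unit $\ell_1$-norm, apply the assumed RSC bound, multiply back by $\|\bDelta\|_1^2$, and use $\|\bDelta\|_1^2 \leq \|\bDelta\|_1$ on $(0,1)$ to recover the degree-one penalty term. The only differences are cosmetic — you make the trivial cases ($\bDelta = \bfm 0$ and $\|\bDelta\|_1 \geq 1$) and the identification of $\bW$ with $\hat\bSigma$ explicit, which the paper leaves implicit.
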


\begin{proof}
For any $\|\bDelta\|_1 < 1$, the $L_1$ norm of $\bDelta/\|\bDelta\|_1$ is 1 and hence satisfies the RSC condition in (\ref{eqn:rsc}). We have
\begin{align*}
\begin{aligned}
    \frac{\bDelta^\top}{\|\bDelta\|_1}\bW\frac{\bDelta}{\|\bDelta\|_1} &\geq
    \nu\frac{\|\bDelta\|_2^2}{\|\bDelta\|_1^2} - \tau\sqrt{\frac{\log p}{n}} \frac{\|\bDelta\|_1}{\|\bDelta\|_1} \\
    \frac{\bDelta^\top}{\|\bDelta\|_1}\bW\frac{\bDelta}{\|\bDelta\|_1} &\geq
    \nu\frac{\|\bDelta\|_2^2}{\|\bDelta\|_1^2} - \tau\sqrt{\frac{\log p}{n}} \frac{\|\bDelta\|^2_1}{\|\bDelta\|^2_1} \\
    \bDelta^\top\bW\bDelta &\geq \nu \|\bDelta\|_2^2 - \tau\sqrt{\frac{\log p}{n}} \|\bDelta\|_1^2. \\
\end{aligned}
\end{align*}
Since $\|\bDelta\|_1<1$, then $\|\bDelta\|_1^2 \leq \|\bDelta\|_1$, implying
\begin{equation*}
    \bDelta^\top\hat\bSigma\bDelta \geq \nu \|\bDelta\|_2^2 - \tau\sqrt{\frac{\log p}{n}} \|\bDelta\|_1.
\end{equation*}
The proof of Lemma \ref{lem:full_rsc} is complete.
\end{proof}

\begin{lemma}
Suppose $\bx_1, \dots, \bx_n \in \RR^p \sim (\bmu, \bSigma)$ are independent and identically distributed sub-Gaussian random vectors. Let $\bar\bx$ and $\hat{\bSigma} = (\hat \sigma_{ij})_{p\times p}$ be the sample mean and sample covariance matrix. If $\log p < n$, then with probability at least $1 - 2p^{-1}$, we have
\begin{enumerate}[(i)]
    \item $\|\bar\bx - \bmu\|_\infty \leq C\sqrt{\log p/n}$ for some large $C$.
    \item $\|\hat{\bSigma} - \bSigma\|_{\max} \leq C\sqrt{\log p/n}$ for some large $C$.
\end{enumerate}
\label{lem:sam_cov_rate}
\end{lemma}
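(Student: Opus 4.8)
The statement to prove is Lemma~\ref{lem:sam_cov_rate}, a standard concentration result: for i.i.d.\ sub-Gaussian vectors, both the sample mean and the sample covariance matrix concentrate around their population counterparts in $\ell_\infty$/elementwise-$\max$ norm at rate $\sqrt{\log p/n}$, uniformly over all coordinates, with probability at least $1-2p^{-1}$.

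\textbf{Plan.} The plan is to prove the two parts by coordinate-wise (resp.\ entry-wise) concentration followed by a union bound over the $p$ coordinates (resp.\ $\binom{p}{2}+p = O(p^2)$ entries). For part (i), I would fix a coordinate $j$ and observe that $\bar x_j - \mu_j = \frac{1}{n}\sum_{i=1}^n (x_{ij}-\mu_j)$ is an average of $n$ i.i.d.\ centered sub-Gaussian random variables, each with sub-Gaussian norm bounded by a constant (say $K = \max_j \|x_{1j}\|_{\psi_2}$, finite under (C1)). By the standard Hoeffding-type tail bound for sums of independent sub-Gaussians, there is an absolute constant $c_0$ so that
\begin{equation*}
\prob\!\left(|\bar x_j - \mu_j| \geq t\right) \leq 2\exp\!\left(-\frac{c_0 n t^2}{K^2}\right).
\end{equation*}
Setting $t = C\sqrt{\log p/n}$ makes the right-hand side at most $2p^{-c_0 C^2/K^2}$, which is $\leq 2p^{-2}$ for $C$ large enough. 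A union bound over $j=1,\dots,p$ then gives $\|\bar\bx-\bmu\|_\infty \leq C\sqrt{\log p/n}$ with probability at least $1-2p^{-1}$.

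\textbf{Part (ii).} For the covariance entries, I would first write $\hat\sigma_{ij} - \sigma_{ij}$ as a centered average. The cleanest route is to note that each entry $\hat\sigma_{ij}$ is (up to the usual $n$ vs.\ $n-1$ centering correction) an average of terms $(x_{ki}-\mu_i)(x_{kj}-\mu_j)$ plus a lower-order term coming from replacing $\mu$ with $\bar\bx$. The key structural fact is that a product of two sub-Gaussian random variables is sub-exponential: $\|(x_{ki}-\mu_i)(x_{kj}-\mu_j)\|_{\psi_1} \leq 2\|x_{ki}-\mu_i\|_{\psi_2}\|x_{kj}-\mu_j\|_{\psi_2} \leq 2K^2$. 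Hence the average of these centered sub-exponential products obeys a Bernstein-type inequality: for $t \leq$ some constant,
\begin{equation*}
\prob\!\left(\Big|\tfrac{1}{n}\sum_{k=1}^n\big[(x_{ki}-\mu_i)(x_{kj}-\mu_j) - \sigma_{ij}\big]\Big| \geq t\right) \leq 2\exp\!\left(-c_1 n \min\!\big(t^2/K^4,\, t/K^2\big)\right).
\end{equation*}
Choosing $t = C\sqrt{\log p/n}$ and using $\log p < n$ ensures we land in the sub-Gaussian (quadratic) regime of the Bernstein bound, giving a tail of order $p^{-c_1 C^2/K^4}$. A union bound over the $O(p^2)$ entries then yields the $\|\hat\bSigma-\bSigma\|_{\max}$ bound. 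The correction from centering at $\bar\bx$ rather than $\mu$ contributes a term of order $\|\bar\bx-\bmu\|_\infty^2 = O(\log p/n)$, which is lower order and absorbed into the constant using part (i).

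\textbf{Main obstacle.} The routine parts are the tail inequalities themselves; these are off-the-shelf (Hoeffding for sub-Gaussians, Bernstein for sub-exponentials, both available in any high-dimensional probability reference). The one place requiring genuine care is part (ii): handling the difference between centering at the unknown $\bmu$ and centering at the sample mean $\bar\bx$, since $\hat\bSigma$ uses $\bar\bx$. I would manage this by expanding $\hat\sigma_{ij} = \frac{1}{n}\sum_k (x_{ki}-\bar x_i)(x_{kj}-\bar x_j)$, writing $x_{ki}-\bar x_i = (x_{ki}-\mu_i) - (\bar x_i - \mu_i)$, and verifying that every cross term is either a concentrating sub-exponential average (bounded as above) or a product involving $\bar x_i - \mu_i$ whose magnitude is controlled by part (i) at rate $\sqrt{\log p/n}$, hence negligible relative to the leading term. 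Combining both high-probability events via a final union bound and adjusting the constant $C$ upward if necessary completes the proof while retaining the stated $1-2p^{-1}$ probability guarantee.
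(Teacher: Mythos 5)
Your proposal is correct and follows essentially the same route as the paper: sub-Gaussian Hoeffding tails plus a union bound over coordinates for part (i), and the product-of-sub-Gaussians-is-sub-exponential fact ($\|XY\|_{\psi_1}\leq 2\|X\|_{\psi_2}\|Y\|_{\psi_2}$) with a Bernstein bound and a union bound over the $O(p^2)$ entries for part (ii); moreover, your centering expansion collapses (after the paper's WLOG $\bmu=\bfm 0$) to exactly the paper's identity $\hat\bSigma = \frac{1}{n}\sum_k \bx_k\bx_k^\top - \bar\bx\bar\bx^\top$. The only cosmetic difference is that you bound the correction term $\bar\bx_i\bar\bx_j$ by $\|\bar\bx-\bmu\|_\infty^2 = O(\log p/n)$ via part (i), whereas the paper applies a separate sub-exponential tail bound to $\bar\bx_i\bar\bx_j$ directly; both yield the same rate.
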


\begin{proof}
Let $\bar\bx=\frac{1}{n}\sum_{k=1}^n \bx_k$ be the sample mean and $\hat{\bSigma} = \frac{1}{n}\sum_{k=1}^n(\bx_k - \bar \bx)(\bx_k - \bar \bx)^\top$ be the sample covariance matrix.
Without loss of generality, we assume $\E(\bx_i) = \bfm 0$ and the sub-Gaussian parameter for $\bx_i$ is $\sigma^2$. Write $\bx_k = (\bx_{k1}, \dots, \bx_{kp})^\top$ and each $X_{kj}$ is a sub-Gaussian random variable with parameter $\bsigma^2$ and let $K = \max_{1\leq j \leq p}  \|\bx_{kj}\|_{\psi_2}$.
Obviously, $\bar\bx$ is also sub-Gaussian random vector with parameter $\sigma^2/\sqrt{n}$. For any $t>0$, we have
\begin{equation*}
    \prob(\| \bar\bx - \bmu \|_\infty > t) \leq 2p\exp\left\{-{cnt^2}/{K^2}\right\}.
\end{equation*}
Take $t = C\sqrt{\log p/n}$ for some large $C>0$, we have
\begin{equation}
    \prob(\| \bar\bx - \bmu \|_\infty < C\sqrt{\log p/n}) \geq 1 - 2p^{-1}.
\label{eqn:bound_mu}
\end{equation}
The sample covariance $\hat\bSigma$ can be decomposed as
\begin{equation*}
  \hat{\bSigma} = \frac{1}{n}\sum_{k=1}^n \bx_k\bx_k^\top - \bar{\bx}\bar{\bx}^\top.
\end{equation*}
Hence we know,
$$
  \max_{i,j}|\hat{\sigma}_{ij} - \sigma_{ij}| \leq \max_{i,j}|\frac{1}{n}\sum_{k=1}^n\bx_{ki}\bx_{kj} - \sigma_{ij}| + \max_{i,j}|\bar{\bx}_i\bar{\bx}_j|.
$$
In addition, we have
\begin{equation*}
  \|\bx_{ki}\bx_{kj}\|_{\psi_1} \leq 2 \|\bx_{ki}\|_{\psi_2} \|\bx_{kj}\|_{\psi_2} \leq 2K^2.
\end{equation*}
Hence $\|\bx_{ki}\bx_{kj} - \sigma_{ij}\|_{\psi_1} \leq 4K^2$.
According to the inequality of tail probability for sub-exponential variables, we have
\[
\prob\left(\max_{i,j} \left|\frac{1}{n}\sum_{k=1}^n\bx_{ki}\bx_{kj} - \sigma_{ij}\right| > t\right) \leq \max\left( 2p^2\exp\left\{-cn\frac{t^2}{16K^4}\right\}, 2p^2\exp\left\{-cn\frac{t}{4K^2}\right\}\right).
\]
It is easy to verify that $\|\bar{\bx}_i\|_{\psi_2} \leq K/\sqrt{1/n}$ and $\|\bar{\bx}_i\bar{\bx}_j\|_{\psi_1} \leq 2\|\bar{\bx}_i\|_{\psi_2}\|\bar{\bx}_j\|_{\psi_2} \leq 2K^2/n$,
we have
\[\prob(\max_{i,j}|\bar{\bx}_i\bar{\bx}_j| > t) \leq 2p^2\exp\left\{-\frac{cnt}{2K^2}\right\}.
\]
By the choice of $t = \frac{C}{2}\sqrt{\frac{\log p}{n}}$ for some large $C>0$, we have $\max_{i,j}|\hat{\sigma}_{ij} - \sigma_{ij}| \leq C\sqrt{\log p /n}$ with probability at least $1-2p^{-1}$, which completes the proof.

\end{proof}

\begin{lemma}[\cite{loh2015regularized}]
  Assume penalty function $P_\lambda(t)$ satisfies conditions (i)-(v), then
  \begin{enumerate}[(a)]
    \item $|P_\lambda(t_1) - P_\lambda(t_2)| \leq \lambda|t_1 - t_2|$ for any $t_1, t_2 \in \RR$.
    \item For any $\bw \in \RR^p$, we have $\lambda\|\bw\|_1 \leq P_\lambda(\bw) + \frac{\nu}{2}\|\bw\|_2^2$.
    \item Suppose $\|\bw^\star\|_0 = s > 0$, then for any $\bw \in \RR^p$ such that $c P_\lambda(\bw^\star) - P_\lambda(\bw) \geq 0$ with $c \geq 1$, we have $c P_\lambda(\bw^\star) - P_\lambda(\bw) \leq \lambda(c\|\bdelta_\calA\|_1 - \|\bdelta_{\calA^c}\|_1)$, where $\bdelta = \bw - \bw^\star$ and $\calA$ is the index set of the $s$ largest elements of $\bdelta$ in magnitude.
    \item Define $J_\lambda(t) = \lambda|t| - P_\lambda(t)$. Then the function $J_\lambda(t) - \frac{\mu}{2}t^2 = \lambda|t| - P_\lambda(t) - \frac{\mu}{2}t^2$ is concave and differentiable.
  \end{enumerate}
\label{lem:penalty}
\end{lemma}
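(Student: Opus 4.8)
The four claims are all deterministic, one-dimensional facts about the scalar penalty $P_\lambda$; the vector statements then follow by coordinatewise summation, since $P_\lambda(\bw) = \sum_j P_\lambda(w_j)$. The plan is therefore to reduce everything to the scalar function and to prove the parts in the order (a), (d), (b), (c), because (b) rests on (d) and (c) uses both (a) and (b). By the symmetry in condition (i) it suffices throughout to analyze $t \ge 0$.

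For part (a) I would first establish $0 \le P_\lambda'(t) \le \lambda$ for every $t > 0$. Non-negativity is immediate from the monotonicity in (iii). For the upper bound, differentiating the ratio in (iv) gives $P_\lambda'(t) \le P_\lambda(t)/t$, and since $P_\lambda(t)/t$ is non-increasing with $\lim_{t\to 0^+} P_\lambda(t)/t = P_\lambda'(0^+) = \lambda$ (using $P_\lambda(0)=0$ from (i) and (ii)), we get $P_\lambda(t)/t \le \lambda$. Hence $|P_\lambda'(t)| \le \lambda$ off the origin, and the Lipschitz bound follows from the fundamental theorem of calculus, splitting the interval at $0$ when $t_1$ and $t_2$ straddle the kink. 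For part (d), symmetry again reduces to $t>0$, where $J_\lambda(t) = \lambda t - P_\lambda(t)$; matching the one-sided derivatives at the origin gives $J_\lambda'(0^+) = \lambda - \lambda = 0 = J_\lambda'(0^-)$, so $J_\lambda$, and hence $J_\lambda - \frac{\mu}{2}t^2$, is differentiable at $0$. Concavity is exactly condition (v): $P_\lambda(t)+\frac{\gamma}{2}t^2$ convex means $P_\lambda'' \ge -\gamma$ wherever it exists, so taking $\mu = \gamma$ the second derivative of $J_\lambda(t) - \frac{\gamma}{2}t^2$ is $-P_\lambda''(t) - \gamma \le 0$.

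Part (b) then drops out of (d). The function $h(t) = J_\lambda(t) - \frac{\gamma}{2}t^2 = \lambda|t| - P_\lambda(t) - \frac{\gamma}{2}t^2$ is concave and differentiable with $h(0) = 0$ and $h'(0) = 0$, so it lies below its tangent line at the origin, i.e. $h(t) \le 0$. This is exactly the scalar inequality $\lambda|t| \le P_\lambda(t) + \frac{\gamma}{2}t^2$; summing over coordinates gives $\lambda\|\bw\|_1 \le P_\lambda(\bw) + \frac{\gamma}{2}\|\bw\|_2^2$, which is the stated bound once the penalty's concavity parameter $\gamma$ is dominated by the RSC curvature $\nu$, the regime relevant to condition (C2).

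The one genuinely delicate part, and where I expect the main obstacle, is (c). Here I would exploit two further consequences of the conditions: $P_\lambda$ is subadditive on $[0,\infty)$ (a function vanishing at $0$ with non-increasing difference quotient (iv) is subadditive) and obeys the sandwich $\lambda|t| - \frac{\gamma}{2}t^2 \le P_\lambda(t) \le \lambda|t|$ (the upper bound from (iv) and (ii), the lower bound from (b)). Writing $\bdelta = \bw - \bw^\star$ and splitting $P_\lambda(\bw)$ over the support of $\bw^\star$ together with the top-$s$ set $\calA$ of $\bdelta$, the plan is to combine subadditivity, which transfers $P_\lambda(\bw)$ to $P_\lambda(\bw^\star)$ at the cost of a $\lambda$-Lipschitz error controlled by $\|\bdelta_\calA\|_1$, with the lower bound on $P_\lambda(\bdelta_{\calA^c})$. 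The subtlety is absorbing the residual quadratic term $\frac{\gamma}{2}\|\bdelta_{\calA^c}\|_2^2$: since the entries outside $\calA$ are no larger than the average magnitude on $\calA$, one bounds $\|\bdelta_{\calA^c}\|_2^2 \le \frac{1}{s}\|\bdelta_\calA\|_1\|\bdelta_{\calA^c}\|_1$, and the hypothesis $cP_\lambda(\bw^\star) - P_\lambda(\bw) \ge 0$ is used to fold the leftover terms into the target $\lambda(c\|\bdelta_\calA\|_1 - \|\bdelta_{\calA^c}\|_1)$. Carefully tracking the constants through this bookkeeping is the crux.
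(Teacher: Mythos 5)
The paper never proves this lemma: it is imported, with citation, from \cite{loh2015regularized}, and the appendix only supplies proofs for Lemmas~\ref{lem:full_rsc} and \ref{lem:sam_cov_rate}. So your attempt can only be judged against correctness and against the source's arguments. Your parts (a), (d) and (b) are correct and essentially the standard arguments, with two caveats. In (d), conditions (i)--(v) do not give you a second derivative (SCAD/MCP are only piecewise smooth), so argue instead that on $(0,\infty)$ the function $\lambda t - P_\lambda(t) - \frac{\gamma}{2}t^2$ is affine minus the convex function $P_\lambda(t)+\frac{\gamma}{2}t^2$, hence concave, symmetrically on $(-\infty,0)$, and glue at $0$ using the derivative matching you established. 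In (b), what you actually prove is $\lambda\|\bw\|_1 \le P_\lambda(\bw) + \frac{\gamma}{2}\|\bw\|_2^2$ with $\gamma$ from condition (v); this $\gamma$-version, not the printed $\nu$-version, is exactly what the paper later uses in proving Theorem~\ref{thm:error_bound} (there it appears as $\frac{\lambda}{2}\|\hat\bDelta\|_1 \le \frac12 P_\lambda(\hat\bDelta)+\frac{\gamma}{4}\|\hat\bDelta\|_2^2$), so the $\nu$ is best read as a typo; your fallback ``once $\gamma$ is dominated by $\nu$'' is not available, since (C2) only guarantees $3\gamma \le 4\nu$.

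The genuine gap is part (c), and it cannot be closed along the route you sketch. Your plan is to lower-bound $P_\lambda(\bdelta_{\calA^c})$ by $\lambda\|\bdelta_{\calA^c}\|_1 - \frac{\gamma}{2}\|\bdelta_{\calA^c}\|_2^2$ and fold the quadratic remainder into the target $\lambda(c\|\bdelta_\calA\|_1 - \|\bdelta_{\calA^c}\|_1)$. That is impossible on dimensional grounds: the remainder (even after $\|\bdelta_{\calA^c}\|_2^2 \le \frac1s\|\bdelta_\calA\|_1\|\bdelta_{\calA^c}\|_1$) is quadratic in $\bdelta$ while the target is homogeneous of degree one, so for large $\|\bdelta\|$ the leftover dominates and no tracking of constants rescues it. In fact the statement as transcribed is false whenever $c>1$: take $\bw=\bw^\star$, so $\bdelta=\bzero$; the hypothesis $(c-1)P_\lambda(\bw^\star)\ge 0$ holds, yet the conclusion would force $(c-1)P_\lambda(\bw^\star)\le 0$, contradicting $P_\lambda(\bw^\star)>0$ (conditions (ii)--(iii) force $P_\lambda(t)>0$ for $t\neq 0$). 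So your failure to complete the bookkeeping is not a defect of your bookkeeping; no proof exists, and note this transcription defect propagates into the paper, whose proof of Theorem~\ref{thm:error_bound} invokes precisely the $c=3$ form.

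For the record, the mechanism that does work (and is how the source argues) uses condition (iv) multiplicatively rather than through your additive quadratic sandwich. Subadditivity and the top-$s$ exchange give $P_\lambda(\bw^\star)-P_\lambda(\bw)\le P_\lambda(\bdelta_\calA)-P_\lambda(\bdelta_{\calA^c})$. Writing $t_{\min}=\min_{i\in\calA}|\delta_i|$, monotonicity of $P_\lambda(t)/t$ yields $\frac{P_\lambda(t_{\min})}{t_{\min}}\|\bdelta_{\calA^c}\|_1 \le P_\lambda(\bdelta_{\calA^c})$ and $P_\lambda(\bdelta_\calA)\le \frac{P_\lambda(t_{\min})}{t_{\min}}\|\bdelta_\calA\|_1$, so the ($c=1$) hypothesis $P_\lambda(\bw^\star)\ge P_\lambda(\bw)$ gives the cone inequality $\|\bdelta_{\calA^c}\|_1\le\|\bdelta_\calA\|_1$ with no quadratic error; a second use of (iv), via the nondecreasing ratio $\bigl(\lambda|t|-P_\lambda(t)\bigr)/|t|$, then converts the penalty difference into $\lambda\bigl(\|\bdelta_\calA\|_1-\|\bdelta_{\calA^c}\|_1\bigr)$. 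The quadratic-absorption idea you propose does appear in \cite{loh2015regularized}, but inside the proof of the estimation theorem, where the RSC term $\nu\|\bdelta\|_2^2$ sits on the other side of the inequality and can absorb $\frac{\gamma}{2}\|\bdelta_{\calA^c}\|_2^2$; it has no place in a standalone lemma whose right-hand side is purely linear in $\bdelta$.
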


\subsection{Proof of Theorem \ref{thm:error_bound}}
Lemma \ref{lem:full_rsc} shows that the RSC condition in (\ref{eqn:rsc}) actually holds for all $\bDelta \in \RR^p$. Now we are ready to prove Theorem~\ref{thm:error_bound}.
Define $\bw^\star = \bSigma^{-1}\bmu$ and $\hat\bDelta = \hat\bw - \bw^\star$. The first order necessary condition (\ref{con:first_order}) implies that
\begin{equation}
    \hat\bDelta^\top\hat\bSigma\hat\bw + \langle \nabla P_\lambda(\hat\bw) - \bar\bx, \hat\bDelta \rangle = 0.
\label{eqn:1storder1}
\end{equation}
By the RSC condition (\ref{eqn:rsc}), we have
\begin{equation}
    \hat\bDelta^\top\hat\bSigma\hat\bDelta \geq \nu\|\hat\bDelta\|_2^2 - \tau\sqrt{\frac{\log p}{n}} \|\hat\bDelta\|_1.
\label{eqn:rsc1}
\end{equation}
Adding (\ref{eqn:1storder1}) to (\ref{eqn:rsc1}), we have
\begin{equation}
    -\hat\bDelta^\top\hat\bSigma\bw^\star - \langle \nabla P_\lambda(\hat\bw) - \bar\bx, \hat\bDelta \rangle \geq \nu\|\hat\bDelta\|_2^2 - \tau\sqrt{\frac{\log p}{n}} \|\hat\bDelta\|_1.
\label{eqn:myineq1}
\end{equation}
Lemma \ref{lem:penalty} shows that $P_{\lambda,\gamma}(\bw) = P_{\lambda}(\bw) + \frac{\gamma}{2}\|\bw\|_2^2$ is a convex function, hence
\begin{equation*}
    P_{\lambda,\gamma}(\bw^\star) - P_{\lambda,\gamma}(\hat\bw) \geq \langle \nabla P_\lambda(\hat\bw) + \gamma \hat\bw,  \bw^\star - \hat\bw\rangle,
\end{equation*}
which implies
\begin{equation}
    - \langle \nabla P_\lambda(\hat\bw),  \hat\bDelta \rangle \leq P_\lambda(\bw^\star) - P_\lambda(\hat\bw) + \frac{\gamma}{2}\| \hat\bDelta \|_2^2.
\label{eqn:bound1}
\end{equation}
Combining  (\ref{eqn:myineq1}) and (\ref{eqn:bound1}),
\begin{equation*}
\begin{aligned}
    \nu\|\hat\bDelta\|_2^2 - \tau\sqrt{\frac{\log p}{n}} \|\hat\bDelta\|_1
    &\leq -\hat\bDelta^\top\hat\bSigma\bw^\star + \bar\bx^\top\hat\bDelta + P_\lambda(\bw^\star) - P_\lambda(\hat\bw) + \frac{\gamma}{2}\| \hat\bDelta \|_2^2 \\
    (\nu - {\gamma}/{2})\| \hat\bDelta \|_2^2 &\leq
    P_\lambda(\bw^\star) - P_\lambda(\hat\bw) + \|\tilde\bSigma\bw^\star - \bar\bx\|_\infty \|\hat\bDelta\|_1 + \tau\sqrt{\frac{\log p}{n}} \|\hat\bDelta\|_1 \\
    (\nu - {\gamma}/{2})\| \hat\bDelta \|_2^2
    &\leq P_\lambda(\bw^\star) - P_\lambda(\hat\bw) + \left( \|\hat\bSigma\bw^\star - \bar\bx\|_\infty + \tau\sqrt{\frac{\log p}{n}} \right) \|\hat\bDelta\|_1. \\
\end{aligned}
\end{equation*}
By triangle inequality, we know
$ \|\hat{\bSigma}\bw^\star - \bar\bx\|_\infty \leq \|\hat{\bSigma}\bw^\star - \bmu\|_\infty + \| \bar\bx - \bmu \|_\infty.$
According to Lemma \ref{lem:sam_cov_rate}, there exists $M_1, M_2>0$ such that
\begin{equation}
    \prob(\| \bar\bx - \bmu \|_\infty \leq M_1\sqrt{\log p/n}) \geq 1 - 2p^{-1}.
\label{eqn:bound_mu}
\end{equation}
\begin{equation*}
    \prob(\|\hat\bSigma - \bSigma\|_{\max} \leq M_2\sqrt{\log p/n} ) \geq 1 - 2p^{-1}.
\end{equation*}
Then with probability at least $1-2p^{-1}$,
\begin{equation}
\begin{aligned}
    \|\hat{\bSigma}{\bw^\star} - \bmu\|_\infty &= \|\hat{\bSigma}{\bw^\star} - \bSigma{\bw^\star} \|_\infty
    \leq \|\hat\bSigma - \bSigma\|_\infty \|\bw^\star\|_1 
    \leq M_2C_1\sqrt{\log p/n}.
\end{aligned}
\label{eqn:bound_cov}
\end{equation}

Combining (\ref{eqn:bound_mu}) and (\ref{eqn:bound_cov}), we know that with probability at least $1-4p^{-1}$, we have
\begin{equation*}
    \|\hat{\bSigma}{\bw^\star} - \bar\bx\|_\infty \leq M'\sqrt{\log p/n},
\end{equation*}
with $M' = M_1 + M_2C_1$.
Take $\lambda = M\sqrt{\log p/n}$ with $M = 4\max\{ M', \tau \}$, we have
$$
    \|\hat\bSigma\bw^\star - \bar\bx\|_\infty + \tau\sqrt{{\log p}/{n}} \leq \lambda/2.
$$
Hence
\begin{equation*}
\begin{aligned}
    (\nu - \gamma/2)\|\hat\Delta\|_2^2
    &\leq P_\lambda(\bw^\star) - P_\lambda(\hat\bw) + \frac{\lambda}{2} \|\hat\Delta\|_1 \\
    &\leq P_\lambda(\bw^\star) - P_\lambda(\hat\bw) + \frac{1}{2} P_\lambda(\hat\Delta) + \frac{\gamma}{4} \|\hat\Delta\|_2^2, \\
\end{aligned}
\end{equation*}
where the second inequality is because $\frac{\lambda}{2}\|\hat\Delta\|_1 \leq \frac{1}{2}P_\lambda(\hat\Delta) + \frac{\gamma}{4}\|\hat\Delta\|_2^2$ by Lemma \ref{lem:penalty}(b).
By the subadditivity of $P_\lambda$, we have $P_\lambda(\hat\Delta) = P_\lambda(\hat\bw-\bw^\star) \leq P_\lambda(\hat\bw) + P_\lambda(\bw^\star) $. Then
\begin{equation*}
\begin{aligned}
    \left(\nu - {\gamma}/{2}\right)\| \hat\Delta \|_2^2
    &\leq P_\lambda(\bw^\star) - P_\lambda(\hat\bw) + \frac{1}{2} P_\lambda(\hat\bw) + \frac{1}{2}P_\lambda(\bw^\star) + \frac{\gamma}{4} \|\hat\Delta\|_2^2 \\
    \left(\nu - {3\gamma}/{4}\right)\| \hat\Delta \|_2^2
    &\leq \frac{3}{2} P_\lambda(\bw^\star) - \frac{1}{2} P_\lambda(\hat\bw) \\
    \left(2\nu - {3\gamma}/{2}\right)\| \hat\Delta \|_2^2
    &\leq 3 P_\lambda(\bw^\star) -  P_\lambda(\hat\bw). \\
\end{aligned}
\end{equation*}
By Lemma \ref{lem:penalty}(c), we have
$3 \lambda\|\hat\Delta_\calI\|_1 - \lambda\|\hat\Delta_{\calI^c}\|_1 \geq 3 P_\lambda(\bw^\star) - P_\lambda(\hat\bw) \geq 0, $
where $\calI$ denotes the index set of the $s$ largest elements of $\hat\Delta$ in magnitude. Since $\nu \geq 3\gamma/4$, we have
\begin{equation*}
    0 \leq \left(2\nu - {3\gamma}/{2}\right)\| \hat\Delta \|_2^2
    \leq 3 \lambda\|\hat\Delta_\calI\|_1 - \lambda\|\hat\Delta_{\calI^c}\|_1.
\end{equation*}
As a result, we have $\|\hat\Delta_{\calI^c}\|_1 \leq 3 \|\hat\Delta_\calI\|_1$ and
\begin{equation*}
    \left(2\nu - \frac{3}{2}\gamma\right)\| \hat\Delta \|_2^2 \leq 3 \lambda\|\hat\Delta_\calI\|_1 - \lambda\|\hat\Delta_{\calI^c}\|_1 \leq  3 \lambda\|\hat\Delta_\calI\|_1 \leq 3 \lambda\sqrt{s}\|\hat\Delta_\calI\|_2,
\end{equation*}
from which we conclude that
\[
\| \hat\Delta \|_2 \leq \frac{6 \lambda\sqrt{s}}{4 \nu - 3 \gamma} = O\left(\sqrt{\frac{s\log p}{n}}\right).
\]
The $\ell_1$ norm bound follows immediately from the $\ell_2$ norm bound
\[
\| \hat\Delta \|_1 = \| \hat\Delta_\calI \|_1 + \| \hat\Delta_{\calI^c} \|_1 \leq 4 \| \hat\Delta_\calI \|_1 \leq 4 \sqrt{s}\| \hat\Delta_\calI \|_2 \leq \frac{24 \lambda{s}}{4 \nu - 3 \gamma} = O\left(s\sqrt{\frac{\log p}{n}}\right).
\]

\subsection{Proof of Theorem \ref{thm:SPT_power}}

According to Theorem~\ref{thm:error_bound}, we know
$\|\hat\bw - \bw^\star\|_1 = O(s\sqrt{\log p/n_1}) = o(1)$ with high probability. Let $\bar\bx_2$ and $\hat\bSigma_2$ be the sample mean and sample covariance matrix based on $\calD_2 = \{\bx_{n_1+1}, \dots, \bx_{n}\}$. On one hand,

\begin{equation*}
\begin{aligned}
    |\hat\bw^\top\bSigma\hat\bw - \hat\bw^{\top}\bSigma\bw^\star| &= |\hat\bw^\top\bSigma(\hat\bw - \bw^\star)| \\
    &\leq \|\bSigma\|_{\max} \|\hat\bw\|_1 \|\hat\bw - \bw^\star\|_1 \\
    &\leq \|\bSigma\|_{\max} (\|\bw^\star\|_1 + \|\hat\bw - \bw^\star\|_1) \|\hat\bw - \bw^\star\|_1 \\
    &= O\left(s\sqrt{{\log p}/{n_1}}\right).
\end{aligned}
\end{equation*}
One the other hand,
\begin{equation*}
\begin{aligned}
    |\hat\bw^\top\bSigma\bw^\star - \bw^{\star\top}\bSigma\bw^\star| &= |(\hat\bw - \bw^\star)^\top \bSigma \bw^\star| \\
    &\leq \|\bSigma\|_{\max} \|\bw^\star\|_1 \|\hat\bw - \bw^\star\|_1 \\
    &= O\left(s\sqrt{{\log p}/{n_1}}\right).
\end{aligned}
\end{equation*}
Hence by triangle inequality,
\begin{equation}
\begin{aligned}
    |\hat\bw^\top\bSigma\hat\bw - \bw^{\star\top}\bSigma\bw^\star| &\leq
    |\hat\bw^\top\bSigma\hat\bw - \hat\bw^{\top}\bSigma\bw^\star| + |\hat\bw^{\top}\bSigma\bw^\star - \bw^{\star\top}\bSigma\bw^\star| \\
    &= O\left(s\sqrt{{\log p}/{n_1}}\right).
\end{aligned}
\label{eqn:pt2_1}
\end{equation}


Given $\hat\bw$, we know that $y_{n_1+1}, \dots, y_n$ are i.i.d. random variables with mean $\bmu^\top\hat\bw$ and variance $\hat\bw^\top\bSigma\hat\bw$. By central limit theorem and $\hat\bw^\top\bSigma\hat\bw - \bw^{\star\top}\bSigma\bw^\star = o(1)$, we know
\begin{equation*}
\begin{aligned}
    \frac{\sqrt{n_2}(\bar y - \bmu^\top\hat\bw)}{\sqrt{\bw^{\star\top}\bSigma\bw^\star}} 
    \overset{d}{\rightarrow} N(0,1).
\end{aligned}
\end{equation*}
The test statistic of the SPT is $T_{\hat\bw} = \sqrt{n_2}\bar y/s_y$ and we reject $H_0$ whenever $|T_{\hat\bw}| > z_{\alpha/2}$. The power function for the SPT is
\begin{equation*}
\begin{aligned}
     \prob\left( \left|\frac{\sqrt{n_2}\bar y}{\sqrt{\bw^{\star\top}\bSigma\bw^\star}} \right| > z_{\alpha/2} \right)
    =& \prob\left( \frac{\sqrt{n_2}\bar y}{\sqrt{\bw^{\star\top}\bSigma\bw^\star}} > z_{\alpha/2} \right) + \prob\left( \frac{\sqrt{n_2}\bar y}{\sqrt{\bw^{\star\top}\bSigma\bw^\star}} <- z_{\alpha/2} \right) \\
    =& \prob\left( \frac{\sqrt{n_2}(\bar y - \bmu^\top\hat\bw )}{\sqrt{\bw^{\star\top}\bSigma\bw^\star}} > z_{\alpha/2} + \frac{\sqrt{n_2}\bmu^\top\hat\bw}{\sqrt{\bw^{\star\top}\bSigma\bw^\star}} \right) + \\
     & \prob\left( \frac{\sqrt{n_2}(\bar y -  \bmu^\top\hat\bw)}{\sqrt{\bw^{\star\top}\bSigma\bw^\star}} < -z_{\alpha/2} + \frac{\sqrt{n_2}\bmu^\top\hat\bw}{\sqrt{\bw^{\star\top}\bSigma\bw^\star}} \right) \\
    \simeq& \Phi\left( -z_{\alpha/2} + \frac{\sqrt{n_2}\bmu^\top\hat\bw}{\sqrt{\bw^{\star\top}\bSigma\bw^\star}} \right) + o(1) \\
    \simeq& \Phi\left( -z_{\alpha/2} + \frac{\sqrt{n_2}\bmu^\top\hat\bw}{\sqrt{\bmu^\top\bSigma^{-1}\bmu}} \right). \\
\end{aligned}
\end{equation*}
Notice that
\begin{equation*}
\begin{aligned}
   \bmu^\top\hat\bw -  \bmu^\top\bSigma^{-1}\bmu &= \bmu^\top\hat\bw -  \bmu^\top\bw^\star = \bmu^\top(\hat\bw - \bw^\star) \\
   &\leq \|\bmu\|_\infty\|\hat\bw - \bw^\star\|_1 = O\left(\|\bmu\|_\infty s \sqrt{\frac{\log p}{n_1}}\right) \rightarrow 0.
\end{aligned}
\end{equation*}
As a result, we know the asymptotic power is
\[
    \beta(T_{\hat\bw}) = \Phi\left( -z_{\alpha/2} + \sqrt{n_2 \bmu^{\top}\bSigma^{-1}\bmu} \right).
\]

\subsection{Proof of Theorems \ref{thm:exchangeable} and \ref{thm:general_exchangeable}}

Theorem~\ref{thm:exchangeable} is a direct corollary of Theorem~\ref{thm:general_exchangeable} by setting $T_k = p_k$, we only prove Theorem~\ref{thm:general_exchangeable} here.
Conditioning on the observed data $\calD$, we know its random permutations $\calD^{\pi_1}, \calD^{\pi_2}, \dots $ are independent from each other. Therefore, the resultant statistics $T_k = g(\calD^{\pi_k})$ are independent and identically distributed conditioning on $\calD$. By the de Finete theorem \citep{aldous1985exchangeability} which states that a mixture of independent and identically distributed sequences are exchangeable, we know is $(T_1, T_2, \dots)$ is an exchangeable sequence, and hence $(T_1, \dots, T_m)$ is exchangeable for any finite $m$. 

\subsection{Proof of Theorem \ref{thm:Mhatrho}}

According to the central limit theorem for exchangeable random variables, we have
\begin{equation*}
    M_{\rho} = {\bar Z}/{\sqrt{(1 + (m-1)\rho)/m}} \overset{d}{\rightarrow} N(0,1).
\end{equation*}
Let $\hat\rho$ be a consistent estimator for $\rho \neq 0$, i.e., $\hat\rho \overset{p}{\rightarrow} \rho$. Hence,
\begin{equation*}
  \frac{\sqrt{(1 + (m-1)\rho)/m}}{\sqrt{(1 + (m-1)\hat\rho)/m}} \overset{p}{\rightarrow} 1.
\end{equation*}
As a result, we know
\begin{equation*}
    M_{\hat\rho} = {\bar Z}/{\sqrt{(1 + (m-1)\hat\rho)/m}} \overset{d}{\rightarrow} N(0,1).
\end{equation*}

\section{Additional Numerical Results} \label{supp:numeric}
Figures \ref{fig:n100_normal} and \ref{fig:n100_t6} reports the size and power of different tests with $n=100$ for samples following multivariate normal distribution and $t_6$ distribution, respectively. The pattern of size and power for different tests is similar to that of $n=40$. The proposed MPT can control the type I error rate below the pre-specified significance level $\alpha=0.05$ while the CLX test completely fails to control the size. Among those tests which can retain the type I error rate, the proposed MPT is the most powerful one for both CS and AR covariance structure. The studies also reveal that the proposed MPT is quite robust to non-Gaussianity.

\bigskip
\begin{figure}[H]
\centering
    \includegraphics[width=0.95\textwidth]{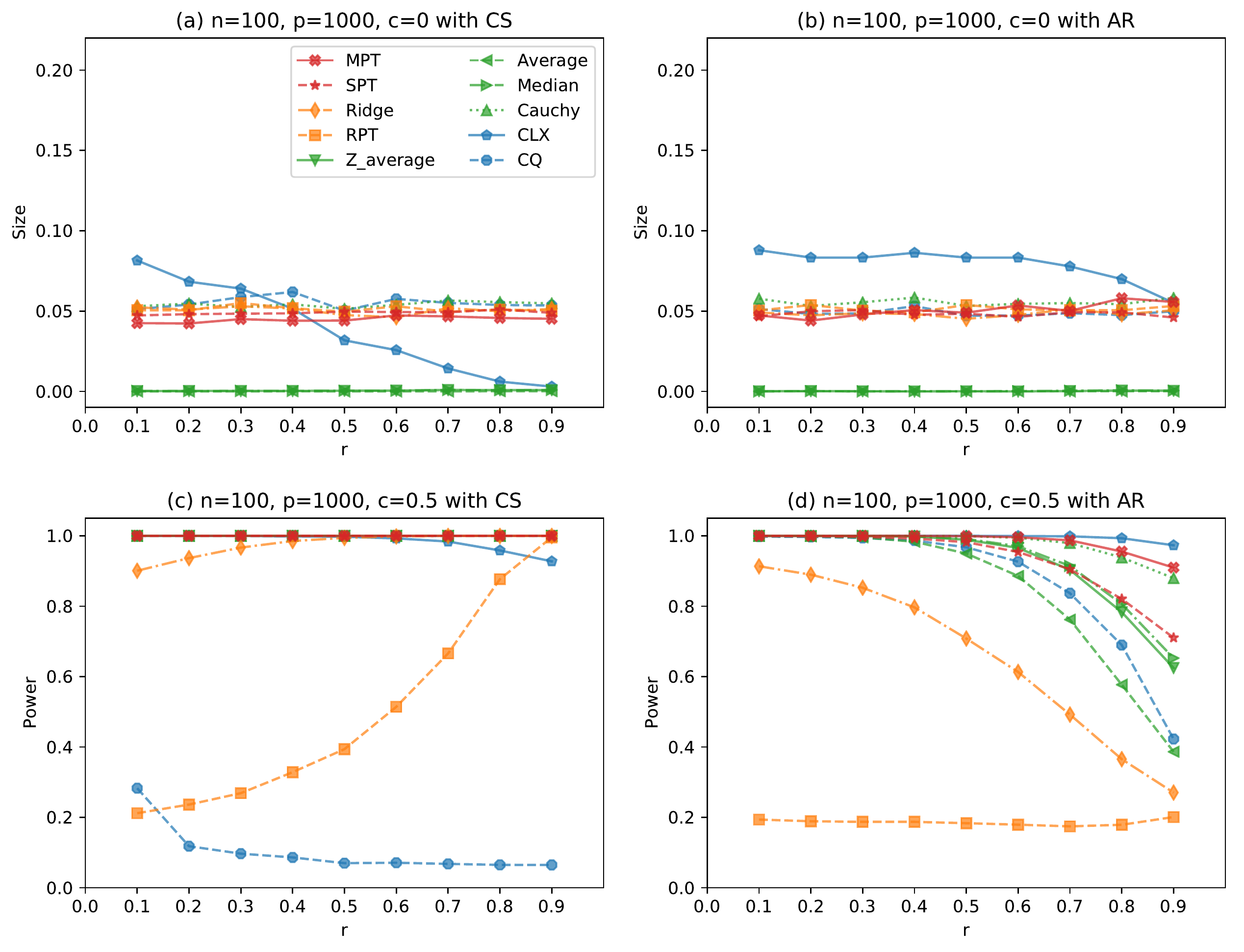}
\caption{Size and power of different tests for normally distributed samples with $n=100$. Panels (a) and (b) show size ($c=0$) under the null hypothesis for the CS and AR structure, respectively. Panels (c) and (d) show power ($c=0.5$) under the alternative hypothesis for the CS and AR structure, respectively.}
\label{fig:n100_normal}
\end{figure}
\newpage
\begin{figure}[H]
\centering
    \includegraphics[width=0.95\textwidth]{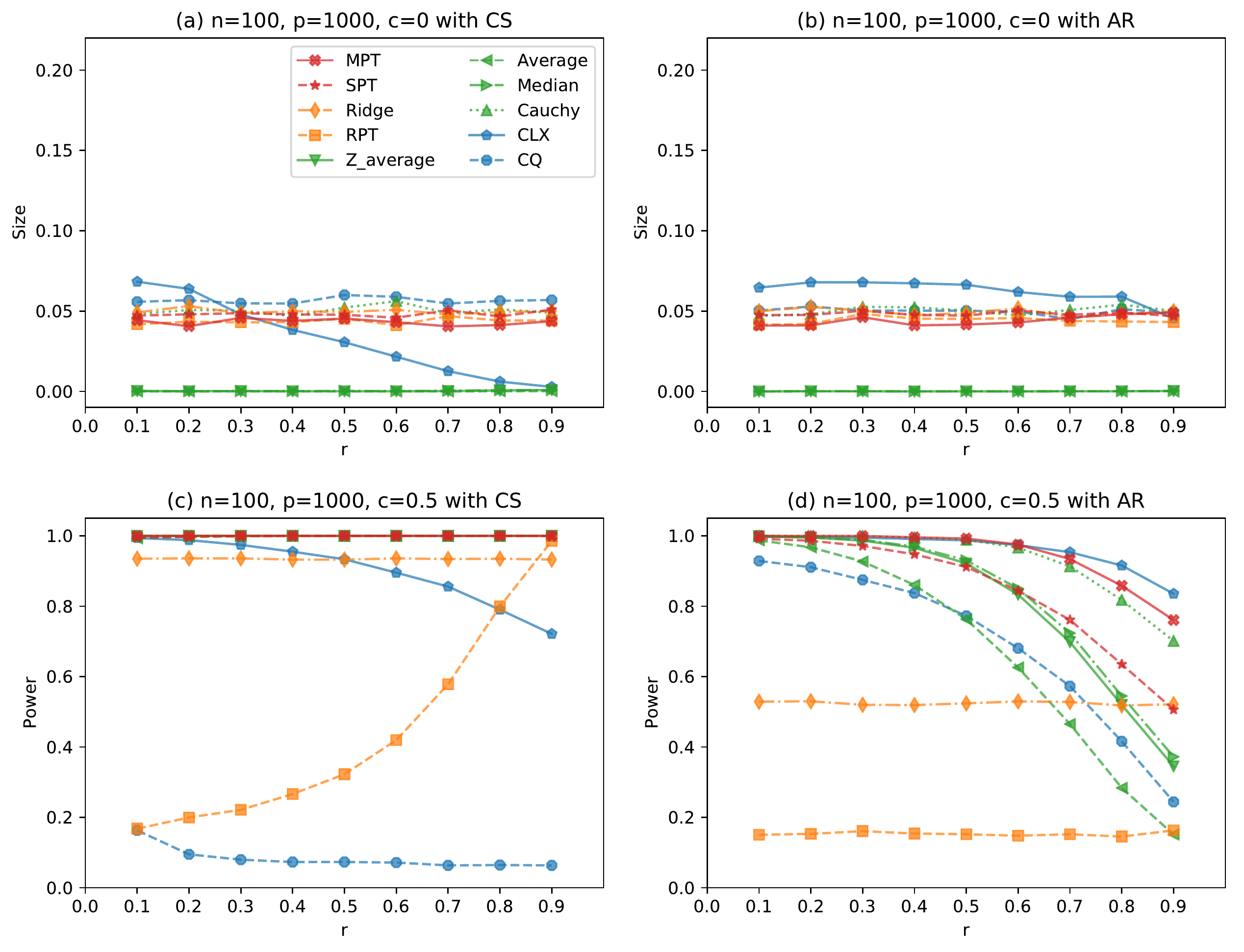}
\caption{Size and power of different tests for samples following multivariate $t_6$ distribution with $n=100$. Panels (a) and (b) show size ($c=0$) under the null hypothesis for the CS and AR structure, respectively. Panels (c) and (d) show power ($c=0.5$) under the alternative hypothesis for the CS and AR structure, respectively.}
\label{fig:n100_t6}
\end{figure}

\end{document}